\documentclass[11pt,a4paper]{article}

\usepackage[margin=1in]{geometry}
\usepackage{amsmath,amssymb,thmtools, amsthm, booktabs,color,doi,graphicx,latexsym,url,xcolor,xspace} 
\usepackage{comment}

\usepackage[numbers,sort&compress]{natbib}
\usepackage{microtype,hyperref}
\usepackage{eucal}
\usepackage[inline]{enumitem}
\setlist[itemize]{label=--}
\setlist[enumerate]{label=(\arabic*),labelindent=\parindent,leftmargin=*}
\usepackage[capitalize]{cleveref}

\newcommand{\denseexp}{1.158}
\newcommand{\denseexpsr}{4/3}
\newcommand{\mainresult}{1.907}
\newcommand{\mainresultsr}{1.927}
\newcommand{\interfacedensity}{1.814}
\newcommand{\interfacedensitysr}{1.854}
\newcommand{\epschoice}{0.186}
\newcommand{\epschoicesr}{0.146}
\newcommand{\eps}{\varepsilon}
\newcommand{\E}{\mathbb{E}}
\newcommand{\indA}{\hat{A}}
\newcommand{\indB}{\hat{B}}
\newcommand{\indX}{\hat{X}}
\newcommand{\TT}{\mathcal{T}}
\newcommand{\TTA}{\TT_{\operatorname{I}}}
\newcommand{\TTB}{\TT_{\operatorname{II}}}
\newcommand{\PP}{\mathcal{P}}
\newcommand{\TTall}{\hat{\TT}}
\newcommand{\virt}[1]{{#1}'}
\newcommand{\figcaptionstyle}{\boldmath}

\newtheorem{theorem}{Theorem}
\newtheorem{lemma}[theorem]{Lemma}

\newtheorem{definition}[theorem]{Definition}
\newtheorem{observation}[theorem]{Observation}

\definecolor{citecolor}{HTML}{0000C0}
\definecolor{urlcolor}{HTML}{000080}

\hypersetup{
    colorlinks=true,
    linkcolor=black,
    citecolor=citecolor,
    filecolor=black,
    urlcolor=urlcolor,
}

\newenvironment{myabstract}
{\list{}{\listparindent 1.5em%
		\itemindent    \listparindent
		\leftmargin    1cm
		\rightmargin   1cm
		\parsep        0pt}%
	\item\relax}
{\endlist}

\newenvironment{mycover}
{\list{}{\listparindent 0pt
		\itemindent    \listparindent
		\leftmargin    1cm
		\rightmargin   1cm
		\parsep        0pt}%
	\raggedright
	\item\relax}
{\endlist}

\newcommand{\myemail}[1]{\,$\cdot$\, {\small #1}}
\newcommand{\myaff}[1]{\,$\cdot$\, {\small #1}\par\medskip}

\hypersetup{
pdftitle={Sparse Matrix Multiplication in the Low-Bandwidth Model},
pdfauthor={},
}

\begin{document}

\begin{mycover}
	{\huge\bfseries\boldmath Sparse Matrix Multiplication in the Low-Bandwidth Model \par}
	\bigskip
	\bigskip
    \textbf{Chetan Gupta}
    \myemail{chetan.gupta@aalto.fi}
    \myaff{Aalto University}
	
    \textbf{Juho Hirvonen}
    \myemail{juho.hirvonen@aalto.fi}
    \myaff{Aalto University}
	    
    \textbf{Janne H.\ Korhonen}
    \myemail{janne.h.korhonen@gmail.com}
    \myaff{TU Berlin}

    \textbf{Jan Studený}
    \myemail{jan.studeny@aalto.fi}
    \myaff{Aalto University}

    \textbf{Jukka Suomela}
    \myemail{jukka.suomela@aalto.fi}
    \myaff{Aalto University}

\end{mycover}

\medskip
\begin{myabstract}
  \noindent\textbf{Abstract.}
We study matrix multiplication in the low-bandwidth model: There are $n$ computers, and we need to compute the product of two $n \times n$ matrices. Initially computer $i$ knows row $i$ of each input matrix. In one communication round each computer can send and receive one $O(\log n)$-bit message. Eventually computer $i$ has to output row $i$ of the product matrix.

We seek to understand the complexity of this problem in the \emph{uniformly sparse} case: each row and column of each input matrix has at most $d$ non-zeros and in the product matrix we only need to know the values of at most $d$ elements in each row or column. This is exactly the setting that we have, e.g., when we apply matrix multiplication for triangle detection in graphs of maximum degree~$d$. We focus on the \emph{supported} setting: the structure of the matrices is known in advance; only the numerical values of nonzero elements are unknown.

There is a trivial algorithm that solves the problem in $O(d^2)$ rounds, but for a large $d$, better algorithms are known to exist; in the moderately dense regime the problem can be solved in $O(dn^{1/3})$ communication rounds, and for very large $d$, the dominant solution is the fast matrix multiplication algorithm using $O(n^{\denseexp})$ communication rounds (for matrix multiplication over fields and rings supporting fast matrix multiplication).

In this work we show that it is possible to overcome quadratic barrier for \emph{all} values of $d$: we present an algorithm that solves the problem in $O(d^{\mainresult})$ rounds for fields and rings supporting fast matrix multiplication and $O(d^{\mainresultsr})$ rounds for semirings, independent of $n$.
\bigskip
\end{myabstract}

\clearpage

\section{Introduction}

We study the task of multiplying very large but sparse matrices in distributed and parallel settings: there are $n$ computers, each computer knows one row of each input matrix, and each computer needs to output one row of the product matrix. There are numerous efficient matrix multiplication algorithms for dense and moderately sparse matrices, e.g.\ \cite{alg-method-congest-cliq2019,le2016further,DBLP:conf/opodis/Censor-HillelLT18,censor2021fast}---however, these works focus on a high-bandwidth setting, where the problem becomes trivial for very sparse matrices. We instead take a more fine-grained approach and consider \emph{uniformly sparse matrices} in a \emph{low-bandwidth} setting. In this regime, the state-of-the-art algorithm for very sparse matrices is a trivial algorithm that takes  $O(d^2)$ communication rounds; here $d$ is a density parameter we will shortly introduce. In this work we present the \emph{first algorithm that breaks the quadratic barrier}, achieving a running time of $O(d^{\mainresult})$ rounds. We will now introduce the sparse matrix multiplication problem in detail in \cref{ssec:intro-setting}, and then describe the model of computing we will study in this work in \cref{ssec:intro-model}.

\subsection{Setting: uniformly sparse square matrices}\label{ssec:intro-setting}

In general, the complexity of matrix multiplication depends on the sizes and shapes of the matrices, as well as the density of each matrix and the density of the product matrix; moreover, density can refer to e.g.\ the average or maximum number of non-zero elements per row and column. In order to focus on the key challenge---overcoming the quadratic barrier---we will introduce here a very simple version of the sparse matrix multiplication problem, with only two parameters: $n$ and $d$.

We are given three $n \times n$ matrices, $A$, $B$, and $\indX$. Each matrix is uniformly sparse in the following sense: there are at most $d$ non-zero elements in each row, and at most $d$ non-zero elements in each column. Here $A$ and $B$ are our input matrices, and $\indX$ is an indicator matrix. Our task is to compute the matrix product \[X = AB,\] but we only need to report those elements of $X$ that correspond to non-zero elements of $\indX$. That is, we need to output
\[
    X_{ik} = \sum_j A_{ij}B_{jk} \text{ for each $i,k$ with $\indX_{ik} \ne 0$.}
\]
Note that the product $X$ itself does not need to be sparse (indeed, there we might have up to $\Theta(d^2)$ non-zero elements per row and column); it is enough that the set of elements that we care about is sparse.

We will study here both matrix multiplication over rings and matrix multiplication over semirings.

\paragraph{Application: triangle detection and counting.}

While the problem setting is rather restrictive, it captures precisely e.g.\ the widely-studied task of triangle detection and counting in bounded-degree graphs (see e.g.\ \cite{dolev2012tri,izumi2017triangle,10.1145/3460900,10.1145/3446330,chang2019improved,korhonen2017deterministic,10.1145/3382734.3405742} for prior work related to triangle detection in distributed settings). Assume $G$ is a graph with $n$ nodes, of maximum degree $d$, represented as an adjacency matrix; note that this matrix is uniformly sparse. We can set $A = B = \indX = G$, and compute $X = AB$ in the uniformly sparse setting. Now consider an edge $(i,k)$ in $G$; by definition we have $\indX_{ik} \ne 0$, and hence we have computed the value of $X_{ik}$. We will have $X_{ik} \ne 0$ if and only if there exists a triangle of the form $\{i,j,k\}$ in graph $G$; moreover, $X_{ik}$ will indicate the number of such triangles, and given $\indX$ and $X$, we can easily also calculate the total number of triangles in the graph (keeping in mind that each triangle gets counted exactly $6$ times).

\subsection{Supported low-bandwidth model}\label{ssec:intro-model}

\paragraph{Low-bandwidth model.}

We seek to solve the uniformly sparse matrix multiplication problem using $n$ parallel computers, in a message-passing setting. Each computer has got its own local memory, and initially computer number $i$ knows row $i$ of each input matrix. Computation proceeds in synchronous communication rounds, and in each round each computer can send one $O(\log n)$-bit message to another computer and receive one such message from another computer (we will assume that the elements of the ring or semiring over which we do matrix multiplication can be encoded in such messages). Eventually, computer number $i$ will have to know row $i$ of the product matrix (or more precisely, those elements of the row that we care about, as indicated by row $i$ of matrix $\indX$). We say that an algorithm runs in time $T$ if it stops after $T$ communication rounds (note that we do not restrict local computation here; the focus is on communication, which tends to be the main bottleneck in large-scale computer systems).

Recent literature has often referred to this model as the \emph{node-capacitated clique} \cite{node-capacitated2019} or \emph{node-congested clique}. It is also a special case of the classical bulk synchronous parallel model \cite{BSP1990}, with local computation considered free. In this work we will simply call this model the \emph{low-bandwidth model}, to highlight the key difference with e.g.\ the \emph{congested clique} model \cite{congest-clique2003} which, in essence, is the same model with $n$ times more bandwidth per node per round.

\paragraph{Supported version.}

To focus on the key issue here---the quadratic barrier---we will study the \emph{supported} version of the low-bandwidth model: we assume that the structure of the input matrices is known in advance. More precisely, we know in advance uniformly sparse indicator matrices $\indA$, $\indB$, and $\indX$, where $\indA_{ij} = 0$ implies that $A_{ij} = 0$, $\indB_{jk} = 0$ implies that $B_{jk} = 0$, and $\indX_{ik} = 0$ implies that we do not need to calculate $X_{ik}$. However, the values of $A_{ij}$ and $B_{jk}$ are only revealed at run time.

In the triangle counting application, the supported model corresponds to the following setting: We have got a fixed, globally known graph $G$, with degree at most $d$. Edges of $G$ are colored either red or blue, and we need to count the number of triangles formed by red edges; however, the information on the colors of the edges is only available at the run time. Here we can set $\indA = \indB = \indX = G$, and then the input matrix $A = B$ will indicate which edges are red.

We note that, while the supported version is \emph{a priori} a significant strengthening of the low-bandwidth model, it is known that e.g.~the supported version of the CONGEST model is not significantly stronger than baseline CONGEST, and almost all communication complexity lower bounds for CONGEST are also known to apply for supported CONGEST~\cite{foerster2019preprocessing}.

\subsection{Contributions and prior work}\label{ssec:intro-contrib}

The high-level question we set to investigate here is as follows: what is the round complexity of uniformly sparse matrix multiplication in the supported low-bandwidth model, as a function of parameters $n$ and $d$. \Cref{fig:complexity} gives an overview of what was known by prior work and what is our contribution; the complexity of the problem lies in the shaded region.

\begin{figure}
\includegraphics[page=1,scale=0.8]{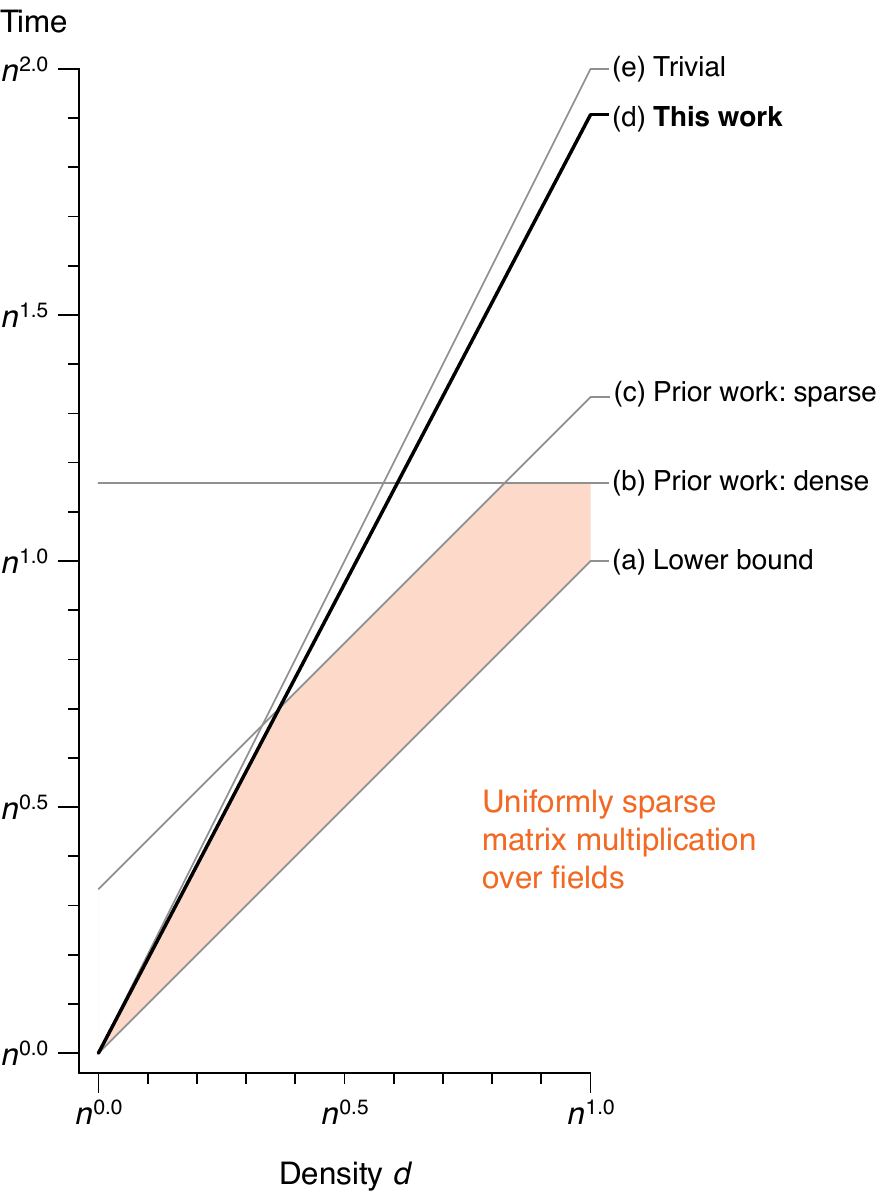}%
\hspace*{\stretch{1}}%
\includegraphics[page=2,scale=0.8]{figs.pdf}

\caption{Complexity of sparse matrix multiplication in the low-bandwidth model: prior work and the new result.}\label{fig:complexity}
\end{figure}

The complexity has to be at least $\Omega(d)$ rounds, by a simple information-theoretic argument (in essence, all $d$ units of information held by a node have to be transmitted to someone else); this gives the lower bound marked with (a) in \cref{fig:complexity}. Above $\Omega(d)$ the problem definition is also robust to minor variations (e.g., it does not matter whether element $A_{ij}$ is initially held by node $i$, node $j$, or both, as in $O(d)$ additional rounds we can transpose a sparse matrix).

The complexity of dense matrix multiplication over rings in the low-band\-width model is closely connected to the complexity of matrix multiplication with centralized sequential algorithms: if there is a centralized sequential algorithm that solve matrix multiplication with $O(n^{\omega})$ element-wise multiplications, there is also an algorithm for the congested clique model that runs in $O(n^{1-2/\omega})$ rounds \cite{alg-method-congest-cliq2019}, and this can be simulated in the low-bandwidth model in $O(n^{2-2/\omega})$ rounds. For fields and at least certain rings such as integers, we can plug in the latest bound for matrix multiplication exponent $\omega < 2.3728596$ \cite{alman2021refined} to arrive at the round complexity of $O(n^{\denseexp})$, illustrated with line~(b) in the figure. For semirings the analogous result is $O(n^{\denseexpsr})$ rounds.

For sparse matrices we can do better by using the algorithm from \cite{censor2021fast}. This algorithm is applicable for both rings and semirings, and in our model the complexity is $O(dn^{1/3})$ rounds; this is illustrated with line~(c). However, for small values of $d$ we can do much better even with a trivial algorithm where node $j$ sends each $B_{jk}$ to every node $i$ with $A_{ij} \ne 0$; then node $i$ can compute $X_{ik}$ for all $k$. Here each node sends and receives $O(d^2)$ values, and this takes $O(d^2)$ rounds (see \cref{lem:brute-force} for the details). We arrive at the upper bound shown in line~(e).

In summary, the problem can be solved in $O(d^2)$ rounds, independent of $n$. However, when $d$ increases, we have got better upper bounds, and for $d = n$ we eventually arrive at $O(n^{\denseexp})$ or $O(n^{\denseexpsr})$ instead of the trivial bound $O(n^2)$.

Now the key question is if we can break the quadratic barrier $O(d^2)$ for all values of $d$. For example, could one achieve a bound like $O(d^{1.5})$? Or, is there any hope of achieving a bound like $O(d^{\denseexp})$ or $O(d^{\denseexpsr})$ rounds for all values of $d$?

We prove that the quadratic barrier can be indeed broken; our new upper bound is shown with line~(d) in \cref{fig:complexity}:
\begin{theorem}\label{thm:main-intro}
There is an algorithm that solves uniformly sparse matrix multiplication over fields and rings supporting fast matrix multiplication in $O(d^{\mainresult})$ rounds in the supported low-band\-width model.
\end{theorem}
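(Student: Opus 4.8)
The plan is to interpolate between the two known regimes — the brute-force algorithm of \cref{lem:brute-force} and the dense fast-matrix-multiplication algorithm — by decomposing the support structure into a \emph{clustered} part, on which dense multiplication pays off, and a \emph{scattered} part, which is sparse enough to be handled almost directly. Throughout we lean on the supported model: since the indicator matrices $\indA,\indB,\indX$ are known in advance, every partition, every assignment of a sub-instance to a set of machines, and every routing schedule used below can be precomputed with no communication, so at run time only numerical values move, along a fixed plan.

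First I would normalize the instance. Viewing the supports of $A$, $B$, $\indX$ as three bipartite graphs $G_A\subseteq I\times J$, $G_B\subseteq J\times K$, $G_{\indX}\subseteq I\times K$ of maximum degree $d$, the task is: for every edge $(i,k)$ of $G_{\indX}$, output $\sum_j A_{ij}B_{jk}$, the sum ranging over the common neighbours $j$ of $i$ in $G_A$ and $k$ in $G_B$. In $O(d)$ rounds we may transpose $A$, so that afterwards node $j$ holds the column $A_{\cdot j}$ and the row $B_{j\cdot}$, and — this being the supported model — also the portion of $G_{\indX}$ spanned by $N_{G_A}(j)\times N_{G_B}(j)$. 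Hence node $j$ can locally form every relevant product $A_{ij}B_{jk}$ for the at most $d^2$ triples $(i,j,k)$ through it, and the whole problem reduces to summing these products into the $O(nd)$ output cells; the entire difficulty is that a single output node may be the target of up to $d^2$ products while it only needs the $\le d$ partial sums they collapse to.

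Next comes the decomposition, controlled by a parameter $\eps$ (optimized at the end to $\epschoice$). Fix a scale $s=d^{\eps}$. Using the precomputed indicator structure, I would split the middle layer so that every relevant triple is assigned to exactly one of two subproblems. The \emph{clustered} subproblem collects, over groups of $\Theta(s)$ middle indices whose $I$- and $K$-neighbourhoods overlap heavily, a family of submatrix products $A[R,S]\cdot B[S,C]$ whose blocks are dense enough (entry count comparable to their area) that the congested-clique fast-matrix-multiplication algorithm of \cite{alg-method-congest-cliq2019}, simulated in the low-bandwidth model at the cost of the exponent of \cite{alman2021refined}, is efficient; I would distribute these sub-instances across the $n$ machines by Lenzen-type balanced routing and pipeline the batches so that, with the right parameters, the clustered part costs $O(d^{\mainresult})$ rounds. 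The \emph{scattered} subproblem is the remainder: after the clustered triples are removed, the number of products any output node must still receive — even after locally aggregating at each middle node — is only about $d^{\interfacedensity}$, the \emph{interface density}, so a direct implementation in the spirit of \cref{lem:brute-force}, preceded by one more round of balanced routing, resolves it in $O(d^{\interfacedensity})$ rounds. A final step sums the (at most two) partial values per output cell and delivers them, for a total of $O(d^{\mainresult}+d^{\interfacedensity})=O(d^{\mainresult})$ once $\eps=\epschoice$.

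The hard part will be the decomposition itself. One must show that for \emph{every} support structure of maximum degree $d$ one can, purely from $\indA,\indB,\indX$, carve out the clustered part so that (i) its blocks are genuinely dense — arbitrary degree-$d$ bipartite graphs contain no large dense block, so the clusters have to be defined through the combined $G_A$–$G_B$–$G_{\indX}$ incidence structure, not through any single graph — and (ii) the scattered remainder really does have interface density only about $d^{\interfacedensity}$. Intertwined with this is the scheduling: the dense sub-instances have widely varying sizes, and packing many simultaneous runs of the fast-matrix-multiplication subroutine onto $n$ machines that each move only $O(\log n)$ bits per round, without load imbalance eating the $\omega$-saving, is where the bulk of the technical work lies. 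The semiring case follows the same outline with the semiring fast-multiplication bound in place of $\denseexp$, giving $O(d^{\mainresultsr})$ rounds.
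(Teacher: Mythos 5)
Your high-level architecture matches the paper's: decompose the triple (triangle) structure into a clustered part handled by parallel runs of dense fast matrix multiplication and a small remainder handled by near-brute-force, balance the two costs via a parameter $\eps$. But the proposal has two genuine gaps, one of which is the entire technical heart of the result.

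First, you explicitly defer the key combinatorial fact --- that \emph{every} degree-$d$ support structure admits such a decomposition --- to ``the hard part.'' That fact is not routine and is precisely what the paper proves: if a collection $\TT$ of triples has $|\TT|\ge d^{2-\eps}n$, then some set of $d$ indices from each of $I,J,K$ contains $\Omega(d^{3-4\eps})$ triples of $\TT$ (\cref{lem:one-cluster}; the argument goes through ``heavy'' $J$--$K$ edges and the inequality $t(i)\le y(i)z(i)\le e(i)^2/4$ relating per-node triangle counts to per-node edge counts). Iterating this yields node-disjoint clusters, and iterating \emph{that} yields the layers $\PP^1,\dotsc,\PP^L$ with a controlled $L$. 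Without this lemma (or a substitute), the claim that the clustered part exists and covers all but $d^{\interfacedensity}n$ triples is unsupported, and the theorem does not follow. A proof sketch that names the crucial lemma as an open step is an outline, not a proof.

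Second, your treatment of the scattered remainder assumes a \emph{per-node} bound (``the number of products any output node must still receive \dots is only about $d^{\interfacedensity}$''), which would let you finish in $O(d^{\interfacedensity})$ rounds by \cref{lem:brute-force}. The decomposition only guarantees a bound on the \emph{total} number of remaining triples, i.e.\ an average of $d^{\interfacedensity}$ per node; individual nodes may still touch up to $d^2$ of them. Converting average to worst-case is itself nontrivial: the paper identifies bad nodes, colors their triangles via the Lov\'asz Local Lemma, and has each bad node recruit $d^{\eps/2}$ helper nodes to simulate a virtual uniformly-sparse instance, which costs $O(d^{2-\eps/2})$ rounds --- not $O(d^{2-\eps})$. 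This changes the balancing: the paper equates $O(L\,d^{\denseexp})$ with $O(d^{2-\eps/2})$, and your implicit balance against $O(d^{2-\eps})$ would yield different (and unjustified) parameters. The exponent $\mainresult$ you quote comes out of the paper's balance, not yours.
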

It should be noted that the value of the matrix multiplication exponent $\omega$ can depend on the ground field or ring we are operating over. The current best bound $\omega < 2.3728596$ \cite{alman2021refined} holds over any field and at least certain rings such as integers, and \cref{thm:main-intro} should be understood as tacitly referring to rings for which this holds. More generally, for example Strassen's algorithm~\cite{strassen1969gaussian} giving $\omega < 2.8074$ can be used over any ring, yielding a running time of $O(d^{1.923})$ rounds using techniques of \cref{thm:main-intro}. We refer interested readers to \cite{censor2021fast} for details of translating matrix multiplication exponent bounds to congested clique algorithms, and to \cite{burgisser-algebraic} for more technical discussion on matrix multiplication exponent in general.

We can also break the quadratic barrier for arbitrary semirings, albeit with a slightly worse exponent:
\begin{theorem}\label{thm:main-intro-sr}
There is an algorithm that solves uniformly sparse matrix multiplication over semirings in $O(d^{\mainresultsr})$ rounds in the supported low-band\-width model.
\end{theorem}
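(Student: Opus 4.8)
The plan is to obtain \cref{thm:main-intro-sr} by running the algorithm behind \cref{thm:main-intro}, but with every call to the fast field/ring matrix-multiplication subroutine replaced by its semiring counterpart and the single free exponent re-optimized. Two primitives are available as black boxes. First, the brute-force routine of \cref{lem:brute-force} solves the problem in $O(d^2)$ rounds; this is the quadratic barrier we must beat. Second, via the congested-clique simulation of \cite{alg-method-congest-cliq2019}, the sparse algorithm of \cite{censor2021fast} --- which is valid over an arbitrary semiring --- solves uniformly sparse $s\times s$ matrix multiplication in $O(d\,s^{1/3})$ low-bandwidth rounds, and a dense $s\times s$ semiring product in $O(s^{4/3})$ rounds, using only the trivial exponent $\omega=3$. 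The point of the supported model is that $\indA$, $\indB$, and $\indX$ are fixed and globally known, so the entire plan --- how to cut the instance up, which primitive to run on each piece, which node does which operation, and all routing schedules --- can be computed offline as a function of the support alone; at run time only ground-set values are moved, $O(1)$ per node per round.

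First apply the $O(d)$-round transposition observation so that each nonzero $A_{ij}$ is held by both node $i$ and node $j$, and likewise for $B$. Next fix a scale $g=g(d)$ and partition the index set into $n/g$ groups; this partitions the set $T=\{(i,j,k):\indA_{ij}\indB_{jk}\indX_{ik}\neq 0\}$ of relevant triples into combinatorial blocks $I\times J\times K$. Uniform sparsity of all three matrices makes this decomposition well-behaved: the number of nonempty blocks is controlled, a fixed index meets only $O(g\cdot\min(d,n/g)^2)$ of them as an $i$-, $j$-, or $k$-coordinate, and the support tells us in advance exactly how many relevant triples lie in each block. Each block is then resolved in one of two ways --- invoking the semiring matrix-multiplication primitive on its $g\times g$ subproblem, or directly forming its relevant products (at the node owning the relevant $j$) and routing them to the output nodes --- the choice being made, block by block, to minimize load. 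Finally, for each relevant pair $(i,k)$, node $i$ adds up the $O(\min(d,n/g))$ per-block partial values of $X_{ik}$; with $n/g$ chosen on the order of $d^{1-\eps}$, this combining interface costs $O(d^{\interfacedensitysr})=O(d^{2-\epschoicesr})$ per node, which will be dominated by the block-handling cost.

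What remains is a worst-case accounting: over all admissible uniformly-sparse supports, the block-by-block choice of method and the assignment of operations to nodes must admit a schedule in which no node sends or receives more than $O(d^{\mainresultsr})$ messages. Feeding the block-incidence bound, the global budget $\sum_{\text{blocks}}t=|T|$, and $|T|\le nd^2$ into the load of a single node, and optimizing $g$ and the dense/sparse threshold, should yield $\mainresultsr=2-\epschoicesr/2$, with $\eps=\epschoicesr$ the balancing choice; the exponent is strictly larger than in \cref{thm:main-intro} only because a dense block costs $O(s^{4/3})$ with $\omega=3$ rather than the smaller field bound. I expect the genuine difficulty to be precisely this balance. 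The naive threshold --- matrix-multiply a block iff it has enough triples to pay for it --- is not obviously enough: an adversary can pile $\Theta(g d^2)$ triples of a single index's group into many just-below-threshold blocks, which alone would give that node $\Theta(d^2)$ routing load. Ruling this out forces a more carefully, globally optimized assignment (some below-threshold blocks must still be matrix-multiplied to unload bottlenecked nodes), and proving such an assignment always exists --- equivalently, that the routing load and the matrix-multiplication load cannot both be near their worst case --- is the heart of the argument, and the step where uniform sparsity of the output indicator $\indX$, not just of the inputs, is essential.
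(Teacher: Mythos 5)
There is a genuine gap, and you have in fact located it yourself: the claim that ``such an assignment always exists --- equivalently, that the routing load and the matrix-multiplication load cannot both be near their worst case'' is precisely the nontrivial combinatorial content of the theorem, and your proposal does not prove it. Worse, the decomposition you start from makes it hard to prove. You partition the \emph{index set} into a fixed grid of $n/g$ groups and treat each block $I\times J\times K$ as a unit. But the triangles of an adversarial support need not concentrate in any blocks of a fixed grid: each index can have its $d$ neighbours scattered across $\Theta(d)$ different groups, so every block can contain few triangles while every node still touches $\Theta(d^2)$ of them, and the dense-multiplication option never pays off on any block. Moreover, the blocks sharing a given $I$-group are not node-disjoint, so even the blocks you do choose to matrix-multiply compete for the $O(1)$ per-round bandwidth of the same physical nodes; your accounting would have to bound the \emph{sum} of dense-multiplication loads per node, not just per block, and nothing in the proposal does this.

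The paper's route is different on exactly this point. Instead of a fixed grid, it finds clusters \emph{adaptively}: \cref{lem:one-cluster} shows, via a counting argument over ``heavy'' $J$--$K$ edges and a carefully chosen seed node $x\in I$, that any collection of at least $d^{2-\eps}n$ triangles contains \emph{some} cluster of $3d$ nodes holding $\Omega(d^{3-4\eps})$ triangles. Iterating and discarding touched triangles (\cref{lem:clustering-layer,lem:clustering}) produces \emph{node-disjoint} clusters, which is what licenses running the dense semiring algorithm of \cite{alg-method-congest-cliq2019} in all clusters in parallel for $O(d^{\denseexpsr})$ rounds per layer (\cref{lem:intro-clustered-sr}). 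The small leftover $\TTB$ of $O(d^{\interfacedensitysr}n)$ triangles is then handled not by your per-pair combining interface but by a separate argument (\cref{lem:sparse-case-main}): nodes touching more than $d^{2-\eps/2}$ leftover triangles are declared bad, their triangles are split by a Lov\'asz-Local-Lemma colouring (\cref{lemma:bad-triangle-colouring}) among disjoint helper nodes, and the resulting uniformly sparse virtual instance is finished by the brute-force routine of \cref{lem:brute-force}. Your final exponent $2-\epschoicesr/2=\mainresultsr$ happens to match the paper's balance, but to close your argument you would need either to prove an analogue of \cref{lem:one-cluster} for your fixed grid (which is false in general) or to switch to an adaptive cluster-extraction argument as the paper does.
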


We see our work primarily as a proof of concept for breaking the barrier; there is nothing particularly magic about the specific exponent $\mainresult$, other than that it demonstrates that values substantially smaller than $2$ can be achieved---there is certainly room for improvement in future work, and one can verify that even with $\omega = 2$ we do not get $O(d)$ round complexity. Also, we expect that the techniques that we present here are applicable also beyond the specific case of supported low-bandwidth model and uniformly sparse matrices.

\subsection{Open questions}

For future work, there are four main open question:
\begin{enumerate}
    \item What is the smallest $\alpha$ such that uniformly sparse matrix multiplication can be solved in $O(d^{\alpha})$ rounds in the supported low-bandwidth model?
    \item Can we eliminate the assumption that there is a support (i.e., the structure of the matrices is known)?
	\item Can we use fast uniformly sparse matrix multiplication to obtain improvements for the general sparse case, e.g.~by emulating the Yuster--Zwick algorithm~\cite{10.1145/1077464.1077466}?
    \item Can the techniques that we introduce here be applied also in the context of the CONGEST model (cf.\ \cite{10.1145/3446330,chang2019improved,izumi2017triangle,korhonen2017deterministic})?
\end{enumerate}

\section{Proof overview and key ideas}\label{sec:proof-overview}

Even though the task at hand is about linear algebra, it turns out that it is helpful to structure the algorithm around graph-theoretic concepts.

\subsection{Nodes, triangles, and graphs}

In what follows, it will be convenient to view our input and our set of computers as having a \emph{tripartite} structure. Let $I$, $J$, and $K$ be disjoint sets of size $n$; we will use these sets to index the matrices $A$, $B$, and $X$ so that the elements are $A_{ij}$, $B_{jk}$, and $X_{ik}$ for $i \in I$, $j \in J$, and $k \in K$. Likewise, we use the sets $I$, $J$, and $K$ to index the indicator matrices $\indA$, $\indB$, and $\indX$ given to the computers in advance. We will collectively refer to $V = I \cup J \cup K$ as the set of \emph{nodes}. We emphasize that we have got $|V| = 3n$, so let us be careful not to confuse $|V|$ and $n$.

Concretely, we interpret this so that we are computing an $n \times n$ matrix product using $3n$ computers---each node $v \in V$ is a computer, such that node $i \in I$ initially knows $A_{ij}$ for all $j$, node $j \in J$ initially knows $B_{jk}$ for all $k$, and node $i \in I$ needs to compute $X_{ik}$ for all $k$. In our model of computing we had only $n$ computers, but we can transform our problem instance into the tripartite formalism by having one physical computer to simulate $3$ virtual computers in a straightforward manner. This simulation only incurs constant-factor overhead in running times, so we will henceforth assume this setting as given.

Now we are ready to introduce the key concept we will use throughout this work:

\begin{definition}
Let $i \in I$, $j \in J$, and $k \in K$. We say that $\{i,j,k\}$ is a \emph{triangle} if $\indA_{ij} \ne 0$, $\indB_{jk} \ne 0$, and $\indX_{ik} \ne 0$. We write $\TTall$ for the set of all triangles.
\end{definition}

In other words, a triangle $\{ i, j, k \}$ corresponds to a possibly non-zero product $A_{ij}B_{jk}$ included in an output $X_{ik}$ we need to compute.

Let $\TT \subseteq \TTall$ be a set of triangles. We write $G(\TT)$ for the graph $G(\TT) = (V,E)$, where $E$ consists of all edges $\{u,v\}$ such that $\{u,v\} \subseteq T$ for some triangle $T \in \TT$. As the matrices are uniformly sparse, we have the following simple observations:

\begin{observation}\label{obs:GT-max-deg}
Each node $i \in I$ in $G(\TTall)$ is adjacent to at most $d$ nodes of $J$ and at most $d$ nodes of $K$. A similar observation holds for the nodes of $J$ and $K$. In particular, the maximum degree of $G(\TTall)$ is at most $2d$, and hence the maximum degree of $G(\TT)$ for any $\TT$ is also at most $2d$.
\end{observation}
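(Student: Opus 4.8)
The plan is to read the edges of $G(\TTall)$ directly off the indicator matrices $\indA$, $\indB$, $\indX$, exploiting the tripartite structure. Every edge of $G(\TTall)$ lies inside some triangle, and each triangle has exactly one endpoint in each of $I$, $J$, $K$; the three two-element subsets of $\{i,j,k\}$ are $\{i,j\}$, $\{j,k\}$, $\{i,k\}$, so every edge joins two \emph{distinct} parts. Consequently a node $i \in I$ has neighbors only in $J \cup K$, and it suffices to bound the neighbors in $J$ and in $K$ separately and add the two bounds.

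First I would fix $i \in I$: if $\{i,j\}$ with $j \in J$ is an edge of $G(\TTall)$, then by definition there is a triangle $\{i,j,k\} \in \TTall$, which in particular forces $\indA_{ij} \ne 0$. Hence the $J$-neighbors of $i$ form a subset of $\{\, j : \indA_{ij} \ne 0 \,\}$, and since $\indA$ has at most $d$ nonzeros in each row, $i$ has at most $d$ neighbors in $J$. Symmetrically, an edge $\{i,k\}$ with $k \in K$ forces $\indX_{ik} \ne 0$, and $\indX$ has at most $d$ nonzeros per row, so $i$ has at most $d$ neighbors in $K$. Adding these gives $\deg_{G(\TTall)}(i) \le 2d$.

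Next I would run the same argument for the other two parts, using the appropriate rows or columns of the indicator matrices: the $I$-neighbors of $j \in J$ are controlled by column $j$ of $\indA$ and its $K$-neighbors by row $j$ of $\indB$; the $I$-neighbors of $k \in K$ are controlled by column $k$ of $\indX$ and its $J$-neighbors by column $k$ of $\indB$. In every case uniform sparsity bounds the count by $d$, so each of these nodes also has degree at most $2d$, establishing that the maximum degree of $G(\TTall)$ is at most $2d$. Finally, for an arbitrary $\TT \subseteq \TTall$, every edge of $G(\TT)$ sits inside a triangle of $\TT \subseteq \TTall$ and is therefore also an edge of $G(\TTall)$; thus $G(\TT)$ is a spanning subgraph of $G(\TTall)$ and inherits the degree bound $2d$.

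There is no real obstacle here; the only thing to watch is not to double-count, but the tripartite structure forbids edges within a single part, so the two per-part bounds simply add rather than overlap.
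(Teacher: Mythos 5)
Your proof is correct and is exactly the straightforward argument the paper intends: the observation is stated without an explicit proof as an immediate consequence of uniform sparsity, and your case analysis (each edge type is witnessed by a nonzero in a specific row or column of $\indA$, $\indB$, or $\indX$, plus the subgraph inheritance for general $\TT$) fills in precisely those details.
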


\begin{observation}\label{obs:GT-size}
Each node $i \in I$ can belong to at most $d^2$ triangles, and hence the total number of triangles in $\TTall$ is at most $d^2n$.
\end{observation}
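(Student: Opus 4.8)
The plan is to bound, for a fixed node $i \in I$, the number of triangles through $i$ by counting independently the admissible choices for the $J$-vertex and for the $K$-vertex, and then to sum this bound over all of $I$.

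First I would fix $i \in I$ and recall that any triangle containing $i$ has the form $\{i,j,k\}$ with $j \in J$, $k \in K$, and $\indA_{ij} \ne 0$, $\indB_{jk} \ne 0$, $\indX_{ik} \ne 0$. The point is that two of these three conditions already pin down $j$ and $k$ separately from $i$ alone: row $i$ of the uniformly sparse matrix $\indA$ has at most $d$ non-zero entries, so there are at most $d$ candidates for $j$; row $i$ of $\indX$ has at most $d$ non-zero entries, so there are at most $d$ candidates for $k$. Hence at most $d^2$ pairs $(j,k)$ can complete a triangle with $i$, which proves the first half of the statement. (The remaining condition $\indB_{jk} \ne 0$ can only shrink this set, so it is not needed here.)

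For the global bound I would invoke the tripartite structure: every triangle in $\TTall$ contains exactly one node of $I$, so summing the per-node bound over all $n$ nodes $i \in I$ counts each triangle exactly once, giving $|\TTall| \le d^2 n$. There is no genuine obstacle; the only things to be careful about are using the correct indicator matrices for the two choices (a row of $\indA$ to select $j$, a row of $\indX$ to select $k$) and observing that the sum over $I$ incurs no overcounting because each triangle has a unique $I$-vertex. The same reasoning, with rows and columns chosen appropriately, also bounds the number of triangles through a node of $J$ or $K$, though that is not needed for the stated claim.
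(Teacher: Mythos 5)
Your proof is correct and matches the argument the paper implicitly relies on (it states this as a simple observation without an explicit proof): at most $d$ choices for $j$ from row $i$ of $\indA$, at most $d$ choices for $k$ from row $i$ of $\indX$, and summing over the unique $I$-vertex of each triangle gives $|\TTall| \le d^2 n$.
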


\begin{observation}\label{obs:GT-JK-edges}
There are at most $dn$ edges between $J$ and $K$ in graph $G(\TTall)$.
\end{observation}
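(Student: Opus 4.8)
The plan is to observe that every edge of $G(\TTall)$ with one endpoint in $J$ and the other in $K$ is witnessed by a non-zero entry of the indicator matrix $\indB$, and then to count such entries using the uniform sparsity assumption.

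First I would unpack the definition of $G(\TTall)$: an edge $\{j,k\}$ with $j \in J$ and $k \in K$ lies in $G(\TTall)$ precisely when there is some triangle $T \in \TTall$ with $\{j,k\} \subseteq T$. Since every triangle contains exactly one node from each of $I$, $J$, and $K$, such a $T$ must have the form $T = \{i,j,k\}$ for some $i \in I$, and by the definition of a triangle this forces $\indB_{jk} \ne 0$ (it also forces $\indA_{ij} \ne 0$ and $\indX_{ik} \ne 0$, but we only need the middle condition). Hence the map sending an edge $\{j,k\}$ between $J$ and $K$ to the pair $(j,k)$ injects the set of such edges into the support of $\indB$.

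It then remains to bound the number of non-zero entries of $\indB$. By the uniform sparsity hypothesis each of the $|J| = n$ rows of $\indB$ has at most $d$ non-zero entries, so $\indB$ has at most $dn$ non-zero entries in total, giving at most $dn$ edges between $J$ and $K$ as claimed. (The analogous argument with $\indA$ handles the $I$–$J$ edges and with $\indX$ the $I$–$K$ edges, which is presumably how \cref{obs:GT-max-deg} and \cref{obs:GT-size} are meant to be verified as well.) There is no genuine obstacle here; the only point worth keeping in mind is that the correspondence between triangles and $J$–$K$ edges is many-to-one — several triangles may share the same $jk$-edge — but since we only want an upper bound, collapsing them is harmless, and this collapse is exactly why the bound $dn$ is so much smaller than the triangle count bound $d^2 n$ of \cref{obs:GT-size}.
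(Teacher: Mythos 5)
Your proof is correct and is exactly the argument the paper intends (the paper leaves this observation unproved as an immediate consequence of uniform sparsity): a $J$--$K$ edge forces $\indB_{jk}\ne 0$, and $\indB$ has at most $dn$ non-zero entries since each of its $n$ rows has at most $d$ of them. Nothing further is needed.
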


Note that $\TTall$ only depends on $\indA$, $\indB$, and $\indX$, which are known in the supported model, and it is independent of the values of $A$ and~$B$.

\subsection{Processing triangles}

We initialize $X_{ik} \gets 0$; this is a variable held by the computer responsible for node $i \in I$. We say that we have \emph{processed} a set of triangles $\TT \subseteq \TTall$ if the current value of $X_{ik}$ equals the sum of products $a_{ij} b_{jk}$ over all $j$ such that $\{i,j,k\} \in \TT$.

By definition, we have solved the problem if we have processed all triangles in $\TTall$. Hence all that we need to do is to show that all triangles can be processed in $O(d^{\mainresult})$ rounds.

\subsection{Clustering triangles}

The following definitions are central in our work; see \cref{fig:clustering} for an illustration:

\begin{definition}\label{def:cluster}
A set of nodes $U \subseteq V$ is a \emph{cluster} if it consists of $d$ nodes from $I$, $d$ nodes from $J$, and $d$ nodes from $K$.
\end{definition}

If $\TT \subseteq \TTall$ is a set of triangles and $U$ is a cluster, we write
\[
    \TT[U] = \{ T \in \TT : T \subseteq U \}
\]
for the set of triangles contained in $U$.

\begin{definition}\label{def:clustered}
A collection of triangles $\PP \subseteq \TTall$ is \emph{clustered} if there are disjoint clusters $U_1, \dotsc, U_k \subseteq V$ such that $\PP = \PP[U_1] \cup \dotsb \cup \PP[U_k]$.
\end{definition}

\begin{figure}
\centering
\includegraphics[page=3,scale=0.9]{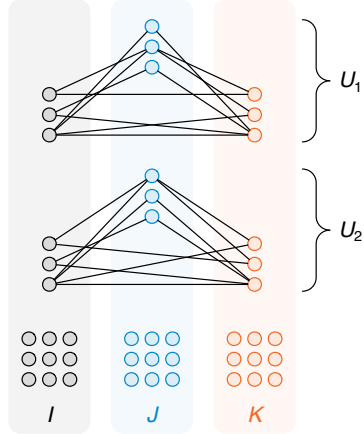}
\caption{\figcaptionstyle A collection of triangles that is clustered, for $d = 3$; in this example we have two clusters, $U_1$ and $U_2$. Note that there can be nodes that are not part of any cluster, but each triangle has to be contained in exactly one cluster.}\label{fig:clustering}
\end{figure}

That is, $\PP$ is clustered if the triangles of $\PP$ can be partitioned into small node-disjoint tripartite structures. The key observation is this:
\begin{lemma}\label{lem:intro-clustered}
    For matrix multiplication over rings, if $\PP \subseteq \TTall$ is clustered, then all triangles in $\PP$ can be processed in $O(d^{\denseexp})$ rounds.
\end{lemma}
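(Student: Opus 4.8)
The plan is to turn the global sparse product into a disjoint union of small \emph{dense} products. Fix disjoint clusters $U_1,\dots,U_k$ witnessing that $\PP$ is clustered, so $\PP = \PP[U_1]\cup\dots\cup\PP[U_k]$; since $|U_\ell|=3d$ and $|V|=3n$ there are at most $n/d$ of them, and --- crucially --- every node of $V$ lies in at most one $U_\ell$. The idea is to process each $\PP[U_\ell]$ by a $d\times d$ dense matrix multiplication run among the $3d$ computers of $U_\ell$, and to run all $k$ of these computations in parallel.

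First I would check that processing a single $\PP[U_\ell]$ is (essentially) an instance of dense $d\times d$ matrix multiplication. Let $A^\ell,B^\ell$ be the restrictions of $A,B$ to the indices in $U_\ell$, so that the nodes of $U_\ell\cap I$ hold the rows of $A^\ell$ and the nodes of $U_\ell\cap J$ hold the rows of $B^\ell$. Any nonzero term $A_{ij}B_{jk}$ appearing in the sum $(A^\ell B^\ell)_{ik}=\sum_{j\in U_\ell\cap J}A_{ij}B_{jk}$ (with $i\in U_\ell\cap I$, $k\in U_\ell\cap K$) forces $\indA_{ij}\ne0$ and $\indB_{jk}\ne0$; hence if we additionally restrict attention to $(i,k)$ with $\indX_{ik}\ne0$, the entries of $A^\ell B^\ell$ compute exactly $\sum_{j:\{i,j,k\}\in\TTall}A_{ij}B_{jk}$ over triangles inside $U_\ell$. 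For a general clustered $\PP$ one first masks out the contributions of those triangles of $\TTall$ that lie inside $U_\ell$ but not in $\PP[U_\ell]$; I expect this can be folded into a constant number of $d\times d$ products (or handled by a mild relabelling within the cluster), and in any case it does not exceed the round budget. Re-laying the inputs (and, at the end, the outputs) into whatever format the congested-clique matrix multiplication algorithm expects costs an additional $O(d)$ rounds via sparse-matrix transposition (the $O(d)$-round observation from the introduction), which is negligible.

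Next I would invoke the known cost of dense matrix multiplication in the low-bandwidth model: on $m$ computers the algorithmic-method congested-clique algorithm of \cite{alg-method-congest-cliq2019} runs in $O(m^{1-2/\omega})$ rounds, where $\omega$ is the matrix multiplication exponent, and simulating a single congested-clique round by routing costs $O(m)$ low-bandwidth rounds (each computer sends and receives $O(m)$ messages), for a total of $O(m^{2-2/\omega})$ rounds; with $m=d$ and $\omega<2.3728596$ this is $O(d^{\denseexp})$. Finally, because the clusters are node-disjoint, the $k$ instances use pairwise disjoint sets of computers, so they can all be executed simultaneously with no computer ever sending or receiving more than one message per round; thus all of $\PP[U_1],\dots,\PP[U_k]$, and hence all of $\PP$, are processed in $O(d^{\denseexp})$ rounds.

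I expect the one genuinely technical point to be the first step --- verifying that the per-cluster computation faithfully realises $\PP[U_\ell]$ (rather than over-counting triangles of $\TTall\setminus\PP$ that happen to sit inside $U_\ell$) while staying within an $O(d^{\denseexp})$-round dense matrix multiplication. The other two ingredients --- the round complexity of dense matrix multiplication and the parallelization over disjoint clusters --- are essentially off the shelf, the latter being immediate from the node-disjointness built into the definition of ``clustered''.
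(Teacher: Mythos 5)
Your proposal is correct and takes essentially the same route as the paper, whose entire proof is the observation that node-disjointness of the clusters lets each $U_\ell$ run the dense congested-clique multiplication of \cite{alg-method-congest-cliq2019} on its own $3d$ computers, all clusters in parallel, for $O(d^{\denseexp})$ rounds in total. The one technical point you flag---masking out triangles of $\TTall$ that lie inside $U_\ell$ but are not in $\PP[U_\ell]$---is not addressed in the paper's one-line proof either, so your write-up is, if anything, the more careful of the two.
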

\begin{proof}
    Let $U_1, \dotsc, U_k \subseteq V$ denote the clusters (as in \cref{def:clustered}). The task of processing $\PP[U_i]$ using the computers of $U_i$ is equivalent to a dense matrix multiplication problem in a network with $3d$ nodes. Hence each subset of nodes $U_i$ can run the dense matrix multiplication algorithm from \cite{alg-method-congest-cliq2019} in parallel, processing all triangles of $\PP$ in $O(d^{\denseexp})$ rounds.
\end{proof}

By applying the dense matrix multiplication algorithm for semiring from \cite{alg-method-congest-cliq2019}, we also get:
\begin{lemma}\label{lem:intro-clustered-sr}
    For matrix multiplication over semirings, if $\PP \subseteq \TTall$ is clustered, then all triangles in $\PP$ can be processed in $O(d^{\denseexpsr})$ rounds.
\end{lemma}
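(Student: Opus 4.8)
The plan is to transcribe the proof of \cref{lem:intro-clustered} essentially verbatim, swapping the dense ring matrix multiplication subroutine for its semiring counterpart from \cite{alg-method-congest-cliq2019}. First I would unfold the definition of \emph{clustered}: the collection $\PP$ comes equipped with disjoint clusters $U_1, \dotsc, U_k \subseteq V$ with $\PP = \PP[U_1] \cup \dotsb \cup \PP[U_k]$, where each $U_i$ consists of $d$ nodes from $I$, $d$ from $J$, and $d$ from $K$. For a fixed index $i$, processing $\PP[U_i]$ using only the $3d$ computers of $U_i$ is exactly a dense $d \times d$ semiring matrix multiplication instance: the nodes of $U_i \cap I$ jointly hold the $d \times d$ submatrix of $A$ supported on $U_i$, similarly for $J$ and $B$, and each node of $U_i \cap I$ must output its row of their semiring product (the values we actually need, the $X_{ik}$ with $\{i,j,k\} \in \PP[U_i]$, form a subset of the entries of that product, so nothing outside the intended triangles is ever formed).

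Next I would run, in parallel over all clusters, the semiring congested-clique matrix multiplication algorithm of \cite{alg-method-congest-cliq2019}, which computes a semiring product of two $d \times d$ matrices in $O(d^{1/3})$ rounds on a congested clique of $d$ nodes. Simulating each congested-clique round in the low-bandwidth model over the $3d$ nodes of a cluster costs an $O(d)$ factor, so each cluster finishes in $O(d^{\denseexpsr})$ rounds. Since the clusters $U_1, \dotsc, U_k$ are pairwise node-disjoint, every (virtual) computer participates in at most one of these executions, so they do not interfere and run genuinely in parallel; hence all of $\PP$ is processed in $O(d^{\denseexpsr})$ rounds.

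I do not anticipate a genuine obstacle, since this is a direct analogue of \cref{lem:intro-clustered}; the two points I would make carefully are that (i) semiring semantics are respected throughout---the per-cluster computation only ever forms sums of products $A_{ij}B_{jk}$ ranging over triangles of $\PP[U_i]$, never requiring subtraction, which is precisely why the semiring algorithm suffices---and (ii) each triangle of $\PP$ lies in exactly one cluster (by disjointness of the $U_i$ together with the fact that every triangle of $\PP[U_i]$ is contained in $U_i$), so every product $A_{ij}B_{jk}$ is added to its target $X_{ik}$ exactly once and the reported values are correct.
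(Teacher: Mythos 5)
Your proposal is correct and matches the paper's approach: the paper proves \cref{lem:intro-clustered-sr} by the one-line observation that the argument of \cref{lem:intro-clustered} goes through verbatim with the semiring congested-clique algorithm of \cite{alg-method-congest-cliq2019} substituted for the ring one, which is exactly what you do. Your additional care about the $O(d)$ simulation overhead, cluster disjointness, and semiring semantics is a sound (if more detailed) elaboration of the same argument.
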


\subsection{High-level idea}\label{ssec:high-level}

Now we are ready to present the high-level idea; see \cref{fig:overview} for an illustration. We show that any $\TTall$ can be partitioned in two components, $\TTall = \TTA \cup \TTB$, where $\TTA$ has got a nice structure that makes it easy to process, while $\TTB$ is small. The more time we put in processing $\TTA$, the smaller we can make $\TTB$, and small sets are fast to process:
\begin{itemize}
    \item For matrix multiplication over rings, if we spend $O(d^{\mainresult})$ rounds in the first phase to process $\TTA$, we can ensure that $\TTB$ contains only $O(d^{\interfacedensity}n)$ triangles, and then it can be also processed in $O(d^{\mainresult})$ rounds. \Cref{thm:main-intro} follows.
    \item For matrix multiplication over semirings, if we spend $O(d^{\mainresultsr})$ rounds in the first phase to process $\TTA$, we can ensure that $\TTB$ contains only $O(d^{\interfacedensitysr}n)$ triangles, which can be processed in $O(d^{\mainresultsr})$ rounds. \Cref{thm:main-intro-sr} follows.
\end{itemize}

\begin{figure}
\centering
\includegraphics[page=4,scale=0.9]{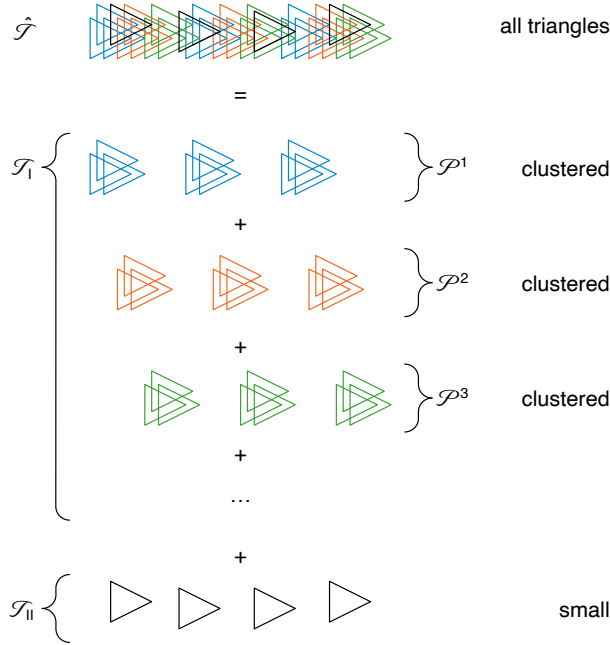}
\caption{\figcaptionstyle Any set of triangles $\TTall$ can be decomposed in two components, a clustered component $\TTA$ and a small component $\TTB$. We can process $\TTA$ efficiently by applying dense matrix multiplication inside each cluster, and we can process $\TTB$ efficiently since it is small.}\label{fig:overview}
\end{figure}

We will now explain how to construct and process $\TTA$ and $\TTB$ in a bit more detail. We emphasize that the \emph{constructions} of $\TTA$ and $\TTB$ only depends on $\TTall$, and hence all of this preparatory work can be precomputed in the supported model. Only the actual \emph{processing} of $\TTA$ and $\TTB$ takes place at run time.
We will use the case of rings (\cref{thm:main-intro}) here as an example; the case of semirings (\cref{thm:main-intro-sr}) follows the same basic idea.

\paragraph{Component $\TTA$ is large but clustered.}

To construct $\TTA$, we start with $\TT^1 = \TTall$. Then we repeatedly choose a clustered subset $\PP^i \subseteq \TT^i$, and let $\TT^{i+1} = \TT^i \setminus \PP^i$. Eventually, after some $L$ steps, $\TT^{L+1}$ will be sufficiently small, and we can stop.

As each set $\PP^i$ is clustered, by \cref{lem:intro-clustered} we can process each of them in $O(d^{\denseexp})$ rounds, and hence the overall running time will be $O(L d^{\denseexp})$ rounds. We choose a small enough $L$ so that the total running time is bounded by $O(d^{\mainresult})$, as needed.

This way we have processed $\TTA = \PP^1 \cup \dotsb \cup \PP^L$. We will leave the remainder $\TTB = \TT^{L+1}$ for the second phase.

For all of this to make sense, the sets $\PP^i$ must be sufficiently large so that we make rapid progress towards a small remainder $\TTB$. Therefore the key nontrivial part of our proof is a graph-theoretic result that shows that if $\TT$ is sufficiently large, we can always find a large clustered subset $\PP \subseteq \TT$. To prove this, we first show that if $\TT$ is sufficiently large, we can find a cluster $U_1$ that contains many triangles. Then we discard $U_1$ and all triangles touching it, and repeat the process until $\TT$ becomes small enough; this way we have iteratively discovered a large clustered subset \[\PP = \PP[U_1] \cup \dotsb \cup \PP[U_k] \subseteq \TT,\] together with disjoint clusters $U_1 \cup \dotsb \cup U_k \subseteq V$. For the details of this analysis we refer to \cref{sec:first-component}.

\paragraph{Component $\TTB$ is small.}

Now we are left with only a relatively small collection of triangles $\TTB$; we chose the parameters so that the total number of triangles in $\TTB$ is $O(d^{\interfacedensity}n)$. We would like to now process all of $\TTB$ in $O(d^{\mainresult})$ rounds.

The key challenge here is that the structure of $\TTB$ can be awkward: even though the average number of triangles per node is low, there might be some nodes that are incident to a large number of triangles. We call nodes that touch too many triangles \emph{bad nodes}, and triangles that touch bad nodes are \emph{bad triangles}.

We first show how we can process all good triangles. This is easy: we can, in essence, make use of the trivial brute-force algorithm.

Then we will focus on the bad triangles. The key observation is that there are only a few bad nodes. If we first imagined that each bad node tried to process its own share of bad triangles, a vast majority of the nodes in the network would be idle. Hence each bad node is able to recruit a large number of \emph{helper nodes}, and with their assistance we can process also all bad triangles sufficiently fast. We refer to \cref{sec:sparse-case} for the details.

\section{Finding one cluster}
\label{sec:find-one-cluster}

Now we will introduce the key technical tool that we will use to construct a decomposition $\TTall = \TTA \cup \TTB$, where $\TTA$ is clustered and $\TTB$ is small. In this section we will show that given any sufficiently large collection of triangles $\TT \subseteq \TTall$, we can always find \emph{one} cluster $U \subseteq V$ that contains many triangles (recall \cref{def:cluster}). The idea is that we will then later apply this lemma iteratively to construct the clustered sets $\PP^1 \cup \dotsb \cup \PP^L = \TTA$.

\begin{lemma}
\label{lem:one-cluster}
Assume that $n \ge d$ and $\eps \ge 0$. Let $\TT \subseteq \TTall$ be a collection of triangles with
\[
     \bigl|\TT\bigr| \ge d^{2-\eps} n.
\]
Then there exists a cluster $U \subseteq V$ with
\[
    \bigl|\TT[U]\bigr| \ge \frac{1}{24} d^{3-4\eps}.
\]
\end{lemma}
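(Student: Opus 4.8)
The plan is to pin down a bounded-size "dense spot" and pad it out to a cluster. I would look for a single node $i^\ast \in I$ that lies in many triangles of $\TT$ and whose two-step neighbourhood in the $I$--$J$ structure (the set of $i'$ sharing some $J$-node with $i^\ast$) is not too large; then set $J'$ to be a size-$d$ superset of $\{\,j : \indA_{i^\ast j}\ne 0\,\}$, set $K'$ to be a size-$d$ superset of $\{\,k : \indX_{i^\ast k}\ne 0\,\}$, and pick $I'\ni i^\ast$ of size $d$ greedily. Every triangle $\{i^\ast,j,k\}\in\TT$ already sits inside $\{i^\ast\}\cup J'\cup K'$, and the remaining $d-1$ slots of $I'$ are used to harvest further triangles $\{i,j,k\}\in\TT$ with $j\in J'$, $k\in K'$.

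The averaging steps, in order: (1) By \cref{obs:GT-JK-edges} there are at most $dn$ edges between $J$ and $K$ in $G(\TTall)$, so deleting the triangles of $\TT$ whose $J$--$K$ edge carries fewer than $|\TT|/(2dn)$ triangles discards at most $|\TT|/2$ of them; hence at least half of $\TT$ lies on "heavy" $J$--$K$ edges, each a base of $\gtrsim d^{1-\eps}$ triangles, and there are $\gtrsim d^{1-\eps}n$ such edges. (2) A node of $I$ can complete at most $d\cdot d=d^2$ of these heavy edges into triangles, so averaging the heavy-edge/triangle incidences over $I$ gives an $i^\ast$ that completes $\gtrsim d^{2-\eps}$ heavy edges into triangles of $\TT$; for each such edge the forcing conditions $\indA_{i^\ast j}\ne0$, $\indX_{i^\ast k}\ne0$ place it inside the box $J'\times K'$ above, and together these edges are the base of $\gtrsim d^{3-2\eps}$ triangles of $\TT$, all with $j\in J'$, $k\in K'$. (3) Take $I'$ to be the $d$ nodes of $I$ occurring as the third vertex of the most of these triangles; since for nonnegative weights the top $d$ of $m$ items always capture at least a $d/m$ fraction, $I'$ retains $\gtrsim d^{3-2\eps}\cdot(d/m)$ of them, where $m$ is the number of distinct third vertices, so $m\le|\text{two-step }I\text{-neighbourhood of }i^\ast|\le d^2$.

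The hard part will be controlling $m$. The trivial bound $m\le d^2$ yields only $\gtrsim d^{2-2\eps}$ triangles, which beats $\tfrac1{24}d^{3-4\eps}$ only when $\eps$ is bounded away from $0$, so this is exactly where the hypothesis $|\TT|\ge d^{2-\eps}n$ must buy something more. What is needed is a choice of $i^\ast$ whose two-step $I$-neighbourhood has size $O(d^{1+2\eps})$: morally the $I$--$J$--$K$ structure near a heavy node must be genuinely clustered rather than expanding, since an expanding structure supports only about $d^3$ triangles in all (each $J$--$K$ edge has $\le d$ completions and, in an expander, completions from different edges essentially never coincide), contradicting $|\TT|\ge d^{2-\eps}n$ once $n\gg d$. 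A naive count only against $\sum_j(\text{column of }j)^2\le d^2n$ gives a "not quite maximal" neighbourhood, not a genuinely small one, so I expect the section really needs a sharper, more global argument that controls clustering in the $J$- and $K$-directions simultaneously; with $m=O(d^{1+2\eps})$ secured, step (3) gives $\gtrsim d^{3-2\eps}\cdot d/d^{1+2\eps}=d^{3-4\eps}$, and the explicit $\tfrac1{24}$ is just bookkeeping of the handful of factors of $2$ lost above.
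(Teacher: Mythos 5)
Your steps (1) and (2) match the paper's proof: the same heavy-edge threshold of roughly $\tfrac12 d^{1-\eps}$ triangles per $J$--$K$ edge (using \cref{obs:GT-JK-edges}), the same averaging to find a node $x\in I$ in $\gtrsim d^{2-\eps}$ heavy triangles, and the same conclusion that the box $J_0\times K_0$ spanned by its neighbours supports $\gtrsim d^{3-2\eps}$ triangles of $\TT$ in total. The gap is exactly where you flag it, in step (3), and the fix is not the one you are guessing at. You try to bound the number $m$ of distinct $I$-vertices participating in these triangles and hope for $m=O(d^{1+2\eps})$; no such bound is proved (or needed) in the paper, and chasing it via expansion of the two-step neighbourhood is the wrong direction.

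The missing idea is a \emph{weighted} tail bound that bypasses $m$ entirely. For each $i\in I$ let $t(i)$ be the number of triangles $i$--$J_0$--$K_0$ and $e(i)$ the number of edges from $i$ into $J_0\cup K_0$. Two facts do all the work: first, $\sum_i e(i)\le 2d^2$, since $J_0$ and $K_0$ each have at most $d$ nodes of degree at most $d$; second, $t(i)\le y(i)z(i)\le e(i)^2/4$, since each triangle needs one $i$--$J_0$ and one $i$--$K_0$ edge. Now let $I_0$ be the top $d$ nodes by $t(i)$, with total $T_0$ and minimum $t_0\le T_0/d$. Every tail node has $t(i)\le t_0$, hence $t(i)\le\sqrt{t(i)}\cdot e(i)/2\le \sqrt{t_0}\,e(i)/2$, and summing against the edge budget gives $T_1\le\sqrt{t_0}\,d^2$. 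If $T_0<\tfrac1{24}d^{3-4\eps}$ then $t_0<\tfrac1{24}d^{2-4\eps}$, so $T_1<\tfrac{1}{\sqrt{24}}d^{3-2\eps}$ and $T_0+T_1<\tfrac14 d^{3-2\eps}$, contradicting the lower bound from step (2). So the top $d$ nodes already capture $\tfrac1{24}d^{3-4\eps}$ triangles with no control on how many distinct third vertices there are; your ``top $d$ of $m$ items capture a $d/m$ fraction'' averaging is too weak precisely because it ignores that a tail node with small $t(i)$ must, by the AM--GM-type bound, be cheap in edges, and edges into $J_0\cup K_0$ are globally scarce.
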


Before we prove this lemma, it may be helpful to first build some intuition on this claim. When $\eps = 0$, the assumption is that there are $d^2n$ triangles in $\TT$. Recall that this is also the largest possible number of triangles (\cref{obs:GT-size}). One way to construct a collection with that many triangles is to split the $3n$ nodes in $V$ into $n/d$ clusters, with $3d$ nodes in each, and then inside each cluster we can have $d^3$ triangles. But in this construction we can trivially find a cluster $U \subseteq V$ with $|\TT[U]| = d^3$; indeed, the entire collection of triangles is clustered.

Now one could ask if there is some clever way to construct a large collection of triangles that cannot be clustered. What \cref{lem:one-cluster} shows is that the answer is no: as soon as you somehow put $d^2n$ triangles into collection $\TT$ (while respecting the assumption that the triangles are defined by some uniformly sparse matrices, and hence $G(\TT)$ is of degree at most $d$) you cannot avoid creating at least one cluster that contains $\Omega(d^3)$ triangles. And a similar claim then holds also for slightly lower numbers of triangles, e.g.\ if the total number of triangles is $d^{1.99}n$, we show that you can still find a cluster with $\Omega(d^{2.96})$ triangles.

Let us now prove the lemma. Our proof is constructive; it will also give a procedure for finding such a cluster.

\begin{proof}[Proof of \cref{lem:one-cluster}]
Consider the tripartite graph $G(\TT)$ defined by collection $\TT$. Let $\{j,k\}$ be an edge with $j \in J$ and $k \in K$. We call edge $\{j,k\}$ \emph{heavy} if there are at least $\frac{1}{2}d^{1-\eps}$ triangles $T$ with $\{j,k\} \subseteq T$. We call a triangle \emph{heavy} if its $J$--$K$ edge is heavy.

By \cref{obs:GT-JK-edges} there can be at most $dn$ non-heavy $J$--$K$ edges, and by definition each of them can contribute to at most $\frac{1}{2}d^{1-\eps}$ triangles. Hence the number of non-heavy triangles is at most $\frac{1}{2}d^{2-\eps}n$, and therefore the number of heavy triangles is at least $\frac{1}{2}d^{2-\eps}n$.

Let $\TT_0 \subseteq \TT$ be the set of heavy triangles. For the remainder of the proof, we will study the properties of this subset and the tripartite graph $G(\TT_0)$ defined by it. So from now on, all triangles are heavy, and all edges refer to the edges of $G(\TT_0)$.

First, we pick a node $x \in I$ that touches at least $\frac{1}{2}d^{2-\eps}$ triangles; such a node has to exists as on average each node touches at least $\frac{1}{2}d^{2-\eps}$ triangles.
Let $J_0$ be the set of $J$-corners and $K_0$ be the set of $K$-corners of these triangles. By \cref{obs:GT-max-deg}, we have $|J_0| \le d$ and $|K_0| \le d$.
We label all nodes $i \in I$ by the following values:
\begin{itemize}
    \item $t(i)$ is the number of triangles of the form $i$--$J_0$--$K_0$,
    \item $y(i)$ is the number of $i$--$J_0$ edges,
    \item $z(i)$ is the number of $i$--$K_0$ edges,
    \item $e(i) = y(i) + z(i)$ is the total number of edges from $i$ to $J_0 \cup K_0$.
\end{itemize}

By the choice of $x$, we have got at least $\frac{1}{2}d^{2-\eps}$ triangles of the form $x$--$J_0$--$K_0$. Therefore there are also at least $\frac{1}{2}d^{2-\eps}$ edges of the form $J_0$--$K_0$, and all of these are heavy, so each of them is contained in at least $\frac{1}{2}d^{1-\eps}$ triangles. We have
\begin{equation}
    \sum_{i\in I} t(i) \ge \frac{1}{2}d^{2-\eps} \cdot \frac{1}{2}d^{1-\eps} \ge \frac{1}{4}d^{3-2\eps}. \label{eq:dense-subset-inequalities}
\end{equation}
Since $J_0$ and $K_0$ have at most $d$ nodes, each of degree at most $d$, there are at most $d^2$ edges of the form $I$--$J_0$, and at most $d^2$ edges of the form $I$--$K_0$. Therefore
\begin{equation}
    \sum_{i\in I} y(i) \le d^2, \quad
    \sum_{i\in I} z(i) \le d^2, \quad
    \sum_{i\in I} e(i) \le 2d^2. \label{eq:bound-for-e-values}
\end{equation}
For each triangle of the form $i$--$J_0$--$K_0$ there has to be an edge $i$--$J_0$ and an edge $i$--$K_0$, and hence for each $i \in I$ we have got
\begin{equation}\label{eq:t-ee-bound}
t(i) \le y(i) z(i) \le \frac{e(i)^2}{4}.
\end{equation}

Let $I_0 \subseteq I$ consists of the $d$ nodes with the largest $t(i)$ values (breaking ties arbitrarily), and let $I_1 = I \setminus I_0$. Define
\[
    T_0 = \sum_{i \in I_0} t(i), \quad
    T_1 = \sum_{i \in I_1} t(i), \quad
    t_0 = \min_{i \in I_0} t(i).
\]

First assume that
\begin{equation}\label{eq:T0-assumption}
    T_0 < \frac{1}{24} d^{3-4\eps}.
\end{equation}
Then
\begin{equation}\label{eq:t0-assumption}
    t_0 \le \frac{T_0}{d} < \frac{1}{24} d^{2-4\eps}.
\end{equation}
By definition, $t(i) \le t_0$ for all $i \in I_1$. By \eqref{eq:t-ee-bound} we have got
\begin{equation}\label{eq:I1ti}
    t(i)
    \le \sqrt{t(i)}\cdot \frac{e(i)}{2}
    \le \frac{\sqrt{t_0}}{2} \cdot e(i)
\end{equation}
for all $i \in I_1$. But from \eqref{eq:bound-for-e-values} we have
\begin{equation}\label{eq:I1ei}
    \sum_{i\in I_1} e(i) \le \sum_{i\in I} e(i) \le 2d^2.
\end{equation}
By putting together  \eqref{eq:t0-assumption}, \eqref{eq:I1ti}, and \eqref{eq:I1ei}, we get
\[
    T_1 = \sum_{i \in I_1} t(i) \le \sum_{i\in I_1} \frac{\sqrt{t_0}}{2} e(i) \le \sqrt{t_0} d^2 < \frac{1}{\sqrt{24}} d^{3-2\eps}.
\]
But we also have from \eqref{eq:T0-assumption} that
\[
    T_0 < \frac{1}{24} d^{3-4\eps} \le \frac{1}{24} d^{3-2\eps},
\]
and therefore we get
\[
    \sum_{i\in I} t(i) = T_0 + T_1 < \biggl(\frac{1}{24} + \frac{1}{\sqrt{24}}\biggr) d^{3-2\eps} < \frac{1}{4} d^{3-2\eps},
\]
but this contradicts \eqref{eq:dense-subset-inequalities}.

Therefore we must have
\[
    T_0 \ge \frac{1}{24} d^{3-4\eps}.
\]
Recall that there are exactly $T_0$ triangles of the form $I_0$--$J_0$--$K_0$, set $I_0$ contains by construction exactly $d$ nodes, and $J_0$ and $K_0$ contain at most $d$ nodes. Since we had $n \ge d$, we can now add arbitrarily nodes from $J$ to $J_0$ and nodes from $K$ to $K_0$ so that each of them has size exactly $d$. Then $U = I_0 \cup J_0 \cup K_0$ is a cluster with
\[
    \bigl|\TT[U]\bigr| \ge \bigl|\TT_0[U]\bigr| \ge T_0 \ge \frac{1}{24} d^{3-4\eps}.
    \qedhere
\]
\end{proof}

\section{Finding many disjoint clusters}\label{sec:first-component}

In \cref{sec:find-one-cluster} we established our key technical tool. We are now ready to start to follow the high-level plan explained in \cref{ssec:high-level,fig:overview}. Recall that the plan is to partition $\TTall$ in two parts, $\TTA$ and $\TTB$, where $\TTA$ has got a nice clustered structure and the remaining part $\TTB$ is small. In this section we will show how to construct and process $\TTA$. In \cref{sec:sparse-case} we will then see how to process the remaining part $\TTB$.

\subsection{One clustered set}

With the following lemma we can find one large clustered subset $\PP \subseteq \TT$ for any sufficiently large collection of triangles $\TT$. In \cref{fig:overview}, this corresponds to the construction of, say, $\PP^1$.

\begin{lemma}
\label{lem:clustering-layer}
Let $\eps_2 \ge 0$ and $\delta > 0$, and assume that $d$ is sufficiently large.
Let $\TT \subseteq \TTall$ be a collection of triangles with
\[
     \bigl|\TT\bigr| \ge d^{2-\eps_2} n.
\]
Then $\TT$ can be partitioned into disjoint sets
\[
    \TT = \PP \cup \TT',
\]
where $\PP$ is clustered and
\[
    \bigl|\PP\bigr| \ge \frac{1}{144} d^{2-5\eps_2-4\delta} n.
\]
\end{lemma}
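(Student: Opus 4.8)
The plan is to obtain $\PP$ by iterating \cref{lem:one-cluster}: repeatedly extract a cluster that captures $\Omega(d^{3-4\eps})$ triangles, remove it together with all triangles touching its nodes, and continue until the leftover collection drops below the threshold $d^{2-\eps_2}n$ where \cref{lem:one-cluster} no longer applies. At that point the accumulated clusters are disjoint by construction, so the union of the triangle sets they captured is clustered, and this is our $\PP$; the leftover is $\TT'$. The subtlety is bookkeeping: each time we remove a cluster $U$ we also have to remove all triangles incident to $U$ (not just those inside $U$), because later clusters must be node-disjoint from $U$, and these collateral removals must be charged somewhere.

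First I would set up the iteration carefully. Start with $\TT^1 = \TT$. While $|\TT^i| \ge d^{2-\eps}n$ for a suitable $\eps$ — I expect we want $\eps$ slightly larger than $\eps_2$, say $\eps = \eps_2 + \delta$ or similar, to absorb the slack — apply \cref{lem:one-cluster} to get a cluster $U_i$ with $|\TT^i[U_i]| \ge \frac{1}{24}d^{3-4\eps}$. Let $\PP^i = \TT^i[U_i]$, remove from $\TT^i$ both $\PP^i$ and every triangle sharing a node with $U_i$, call the result $\TT^{i+1}$, and repeat. Say the loop runs for $L$ rounds. Set $\PP = \PP^1 \cup \dotsb \cup \PP^L$ and $\TT' = \TT \setminus \PP$; since the $U_i$ are pairwise disjoint and each $\PP^i = \PP[U_i]$, the set $\PP$ is clustered.

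Next I would bound the number of collateral triangles removed. A single cluster $U_i$ has $3d$ nodes; by \cref{obs:GT-size} (applied symmetrically to $I$, $J$, $K$) each node lies in at most $d^2$ triangles, so at most $3d \cdot d^2 = 3d^3$ triangles touch $U_i$, hence $|\TT^i| - |\TT^{i+1}| \le 3d^3$. On the other hand $|\TT^i| - |\TT^{i+1}| \ge |\PP^i| \ge \frac{1}{24}d^{3-4\eps}$, but that lower bound alone is too weak to directly control $L$; instead I would argue the other direction — since we only remove at most $3d^3$ triangles per step and we must remove at least $|\TT| - d^{2-\eps}n \ge d^{2-\eps_2}n - d^{2-\eps}n$ triangles total before stopping, we get a lower bound on $L$. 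Combined with $|\PP| = \sum_i |\PP^i| \ge L \cdot \frac{1}{24}d^{3-4\eps}$, and using $L \ge \frac{|\TT| - d^{2-\eps}n}{3d^3}$, this yields $|\PP| \ge \frac{1}{72}d^{-4\eps}(|\TT| - d^{2-\eps}n)$; with $|\TT| \ge d^{2-\eps_2}n$ and the gap between $\eps_2$ and $\eps$ chosen so that $d^{2-\eps}n \le \frac12 d^{2-\eps_2}n$ (which holds for $d$ large once $\eps > \eps_2$), this gives $|\PP| \ge \frac{1}{144}d^{2-\eps_2-4\eps}n$. Choosing $\eps = \eps_2 + \delta$ makes the exponent $2 - 5\eps_2 - 4\delta$, matching the statement; the "$d$ sufficiently large" hypothesis is exactly what lets $d^{-\delta}$-type factors swallow the constants.

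The main obstacle I anticipate is the charging argument for collateral removals — getting the bound $|\PP| \ge \frac{1}{144}d^{2-5\eps_2-4\delta}n$ requires that the triangles removed "for free" (incident to $U_i$ but outside it) do not dwarf the $\PP^i$ we actually keep. The key insight that rescues this is that the \emph{per-step} collateral loss is bounded by $O(d^3)$ while the \emph{per-step} gain is $\Omega(d^{3-4\eps})$: these differ only by a $d^{4\eps}$ factor, which is exactly the polynomial-in-$d$ loss reflected in going from exponent $2-\eps_2$ down to $2-5\eps_2-4\delta$. So the accounting works out precisely because both quantities scale like $d^3$ up to the $\eps$-dependent correction, and the whole argument is really just making that ratio explicit and then summing over the $\Theta(|\TT|/d^3)$ iterations.
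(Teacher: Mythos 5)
Your proposal matches the paper's proof essentially step for step: the same iteration with $\eps = \eps_2+\delta$, the same $3d^3$ bound on collateral removals per step, the same lower bound on the number of iterations via the total number of triangles that must be deleted before stopping, and the same final product giving $\frac{1}{144} d^{2-5\eps_2-4\delta} n$. The one point you gloss over is that the clusters returned by \cref{lem:one-cluster} are not literally disjoint ``by construction''---the arbitrary padding nodes used to bring $J_0$ and $K_0$ up to size exactly $d$ may collide with earlier clusters---but since any such colliding node touches no triangle of the current residual collection, it can be swapped for an unused node, which is precisely the minor post-processing step the paper performs.
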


\begin{proof}
We construct $\PP$ and $\TT'$ iteratively as follows. Start with the original collection $\TT$. Then we repeat the following steps until there are fewer than $d^{2-\eps_2-\delta} n$ triangles left in $\TT$:
\begin{enumerate}
    \item Apply \cref{lem:one-cluster} to $\TT$ with $\eps = \eps_2 + \delta$ to find a cluster $U$.
    \item For each triangle $T \in \TT[U]$, add $T$ to $\PP$ and remove it from~$\TT$.
    \item For each triangle $T \in \TT$ with $T \cap U \ne \emptyset$, add $T$ to $\TT'$ and remove it from $\TT$.
\end{enumerate}
Finally, for each triangle $T \in \TT$ that still remains, we add $T$ to $\TT'$ and remove it from $\TT$.

Now by construction, $\PP \cup \TT'$ is a partition of $\TT$. We have also constructed a set $\PP$ that is \emph{almost} clustered: if $U$ shares some node $v$ with a cluster that was constructed earlier, then $v$ is not incident to any triangle of $\TT$, and hence $v$ can be freely replaced with any other node that we have not used so far. Hence we can easily also ensure that $\PP$ is clustered, with minor additional post-processing.

We still need to prove that $\PP$ is large.

Each time we apply \cref{lem:one-cluster}, we delete at most $3d^3$ triangles from $\TT$: there are $3d$ nodes in cluster $U$, and each is contained in at most $d^2$ triangles (\cref{obs:GT-size}). On the other hand, iteration will not stop until we have deleted at least
\[
    d^{2- \eps_2}n - d^{2- \eps_2 - \delta}n \geq \frac{1}{2}d^{2- \eps_2}n
\]
triangles; here we make use of the assumption that $d$ is sufficiently large (in comparison with the constant $\delta$). Therefore we will be able to apply \cref{lem:one-cluster} at least
\[
    \frac{\frac{1}{2}d^{2- \eps_2}n}{3d^3} = \frac{1}{6}d^{-1 - \eps_2}n
\]
times, and each time we add to $\PP$ at least
\[
    \frac{1}{24} d^{3-4\eps_2-4\delta}
\]
triangles, so we have got
\[
    \bigl|\PP\bigr|
    \ge \frac{1}{24} d^{3-4\eps_2-4\delta} \cdot \frac{1}{6} d^{-1 - \eps_2} n
    = \frac{1}{144} d^{2-5\eps_2-4\delta} n.
    \qedhere
\]
\end{proof}

\subsection{Many clusterings}

Next we will apply \cref{lem:clustering-layer} repeatedly to construct all clustered sets $\PP^1, \dotsc, \PP^L$ shown in \cref{fig:overview}.

\begin{lemma}
\label{lem:clustering}
Let $0 \le \eps_1 < \eps_2$ and $\delta > 0$, and assume that $d$ is sufficiently large.
Let $\TT \subseteq \TTall$ be a collection of triangles with
\[
     \bigl|\TT\bigr| \le d^{2-\eps_1} n.
\]
Then $\TT$ can be partitioned into disjoint sets
\[
    \TT = \PP^1 \cup \dotsc \cup \PP^L \cup \TT',
\]
where each $\PP^i$ is clustered, the number of layers is
\[
    L \le 144 d^{5 \eps_2 - \eps_1 + 4\delta},
\]
and the number of triangles in the residual part $\TT'$ is
\[
     \bigl|\TT'\bigr| \le d^{2 - \eps_2} n.
\]
\end{lemma}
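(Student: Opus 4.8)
The plan is to prove this by repeatedly peeling off one clustered layer at a time using \cref{lem:clustering-layer}, exactly as suggested by \cref{fig:overview}. Set $\TT^1 = \TT$. Given $\TT^i$, proceed as follows: if $|\TT^i| < d^{2-\eps_2} n$, stop and put $L = i-1$ and $\TT' = \TT^i$; otherwise $|\TT^i| \ge d^{2-\eps_2} n$, so \cref{lem:clustering-layer}, invoked with the same parameters $\eps_2$ and $\delta$ (and using that $d$ is sufficiently large), yields a partition $\TT^i = \PP^i \cup (\TT^i)'$ with $\PP^i$ clustered and $|\PP^i| \ge \frac{1}{144} d^{2-5\eps_2-4\delta} n$; we then set $\TT^{i+1} = (\TT^i)'$ and continue. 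By construction the sets $\PP^i$ are pairwise disjoint, each is a subset of $\TT$, and $\TT = \PP^1 \cup \dotsc \cup \PP^L \cup \TT'$, so the partition claim and (by the stopping rule) the bound $|\TT'| \le d^{2-\eps_2} n$ are immediate; if the process stops already at $i = 1$ we simply take $L = 0$.

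It remains to bound $L$. Passing from $\TT^i$ to $\TT^{i+1}$ removes exactly $|\PP^i| \ge \frac{1}{144} d^{2-5\eps_2-4\delta} n$ triangles from the active set, and the active set never grows. Since we start from $|\TT^1| = |\TT| \le d^{2-\eps_1} n$, the total number of triangles removed over all $L$ layers is at most $d^{2-\eps_1} n$, hence
\[
    L \cdot \frac{1}{144} d^{2-5\eps_2-4\delta} n \le d^{2-\eps_1} n,
\]
which rearranges to $L \le 144\, d^{5\eps_2 - \eps_1 + 4\delta}$, as claimed.

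I do not expect any genuine obstacle here: the only points requiring care are bookkeeping, namely checking that the hypothesis $|\TT^i| \ge d^{2-\eps_2} n$ of \cref{lem:clustering-layer} is precisely the condition under which the loop has not yet terminated, that the chunks $\PP^1, \dotsc, \PP^L, \TT'$ genuinely partition $\TT$ (each iteration moves a set of triangles out of the active set and never reinserts any), and that the degenerate case where $\TT$ is already below the target size is covered by $L = 0$. The condition $\eps_1 < \eps_2$ plays no essential role in the argument; it merely ensures that $5\eps_2 - \eps_1 + 4\delta$ is the operative exponent, since otherwise $\TT$ already meets the target bound and $L = 0$ suffices.
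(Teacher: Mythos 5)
Your proposal is correct and matches the paper's proof essentially verbatim: the same iterative peeling via \cref{lem:clustering-layer} with fixed parameters $\eps_2,\delta$, the same stopping rule (up to a harmless strict-vs-nonstrict threshold), and the same counting argument bounding $L$ by dividing the initial triangle budget $d^{2-\eps_1}n$ by the per-layer guarantee $\frac{1}{144}d^{2-5\eps_2-4\delta}n$.
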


\begin{proof}
Let $\TT^1 = \TT$. If
\[
    \bigl|\TT^i \bigr| \le d^{2 - \eps_2} n,
\]
we can stop and set $L = i-1$ and $\TT' = \TT^i$. Otherwise we can apply \cref{lem:clustering-layer} to partition $\TT^i$ into the clustered part $\PP^i$ and the residual part $\TT^{i+1}$.

By \cref{lem:clustering-layer}, the number of triangles in each set $\PP^i$ is at least
\[
    \bigl|\PP^i\bigr| \ge \frac{1}{144} d^{2-5\eps_2-4\delta} n,
\]
while the total number of triangles was by assumption
\[
     \bigl|\TT\bigr| \le d^{2-\eps_1} n.
\]
Hence we will run out of triangles after at most
\[
    L = \frac{d^{2-\eps_1} n}{\frac{1}{144} d^{2-5\eps_2-4\delta} n} = 144 d^{5\eps_2-\eps_1+4\delta}
\]
iterations.
\end{proof}

\subsection{Simplified version}
\label{subsec:simplified}

If we are only interested in breaking the quadratic barrier, we have now got all the ingredients we need to split $\TTall$ into a clustered part $\TTA$ and a small part $\TTB$ such that both $\TTA$ and $\TTB$ can be processed fast.

\begin{lemma}\label{lem:first-component-simplified}
For matrix multiplication over rings, it is possible to partition $\TTall$ into $\TTA \cup \TTB$ such that
\begin{enumerate}
    \item $\TTA$ can be processed in $O(d^{1.858})$ rounds in the supported low-bandwidth model, and
    \item $\TTB$ contains at most $d^{1.9}n$ triangles.
\end{enumerate}
\end{lemma}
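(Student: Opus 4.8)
The plan is to read the decomposition off directly from \cref{lem:clustering}, choosing its free parameters so that processing the clustered part stays comfortably sub-quadratic while the residual part is small. By \cref{obs:GT-size} we have $|\TTall| \le d^2 n$, so \cref{lem:clustering} applies with $\eps_1 = 0$: for any $\eps_2 > 0$ and $\delta > 0$ (and $d$ above the constant threshold the lemma requires) it produces a partition $\TTall = \PP^1 \cup \dots \cup \PP^L \cup \TT'$ in which every $\PP^i$ is clustered, the number of layers satisfies $L \le 144\, d^{5\eps_2 + 4\delta}$, and $|\TT'| \le d^{2 - \eps_2} n$. (For $d$ below that threshold we have $d = O(1)$, and we may instead take $\TTA = \TTall$ and $\TTB = \emptyset$ and process everything with the trivial $O(d^2) = O(1)$-round brute-force algorithm; so from now on assume $d$ is large.)

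Concretely, I would take $\eps_2 = 1/10$ and $\delta = 1/20$, so that
\[
    L \le 144\, d^{1/2 + 1/5} = 144\, d^{0.7}
    \qquad\text{and}\qquad
    |\TT'| \le d^{2 - 1/10} n = d^{1.9} n,
\]
and set $\TTA = \PP^1 \cup \dots \cup \PP^L$ and $\TTB = \TT'$. To process $\TTA$ I would handle the layers one at a time: for $i = 1, \dots, L$, invoke \cref{lem:intro-clustered} on the clustered set $\PP^i$, spending $O(d^{\denseexp})$ rounds each time. Since the $\PP^i$ partition $\TTA$ and ``processing'' only accumulates the relevant products into the output variables $X_{ik}$, after the $L$-th step exactly the triangles of $\TTA$ have been processed; the clusters used by different layers may overlap, but that is harmless because the layers are run sequentially rather than in parallel. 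The total cost is $O(L \cdot d^{\denseexp}) = O(d^{0.7}\cdot d^{\denseexp}) = O(d^{1.858})$ rounds, and $\TTB$ contains at most $d^{1.9}n$ triangles, as required.

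I expect essentially no obstacle to remain at this point: all of the technical work has already been absorbed into \cref{lem:one-cluster} and its iterated form \cref{lem:clustering}, and what is left is just exponent bookkeeping --- keeping $5\eps_2 + 4\delta + \denseexp$ strictly below $2$ so that $\TTA$ stays sub-quadratic, while keeping $2 - \eps_2$ small enough that $\TTB$ is a sufficiently sparse instance for the second phase. The choice $\eps_2 = 1/10$, $\delta = 1/20$ is merely one convenient point in this trade-off, and the resulting exponent $1.858$ is not tight (shrinking $\delta$ already improves it), which is precisely why the optimized version in the later sections can do better.
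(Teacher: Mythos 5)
Your proposal is correct and matches the paper's own proof essentially verbatim: the same application of \cref{lem:clustering} with $\eps_1 = 0$, $\eps_2 = 0.1$, $\delta = 0.05$, the same identification $\TTA = \PP^1 \cup \dotsb \cup \PP^L$, $\TTB = \TT'$, and the same $L = O(d^{0.7})$ sequential invocations of \cref{lem:intro-clustered} giving $O(d^{0.7 + \denseexp}) = O(d^{1.858})$ rounds. The exponent arithmetic checks out and no gap remains.
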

\begin{proof}
    We will solve the case of a small $d$ by brute force; hence let us assume that $d$ is sufficiently large.
    We apply \cref{lem:clustering} to $\TT = \TTall$ with $\eps_1 = 0$, $\eps_2 = 0.1$, and $\delta = 0.05$.
    We will set $\TTA = \PP^1 \cup \dotsb \cup \PP^L$ and $\TTB = \TT'$.
    The size of $\TTB$ is bounded by $d^{1.9}n$.
    The number of layers is $L = O(d^{0.7})$.
    
    We can now apply \cref{lem:intro-clustered} for $L$ times to process $P^1, \dotsc, P^L$; each step takes $O(d^{\denseexp})$ rounds and the total running time will be therefore $O(d^{1.858})$ rounds.
\end{proof}

We will later in \cref{sec:sparse-case} see how one could then process $\TTB$ in $O(d^{1.95})$ rounds, for a total running time of $O(d^{1.95})$ rounds.
However, we can do slightly better, as we will see next.

\subsection{Final algorithm for rings}

We can improve the parameters of \cref{lem:first-component-simplified} a bit, and obtain the following result. Here we have chosen the parameters so that the time we take now to process $\TTA$ will be equal to the time we will eventually need to process $\TTB$.

\begin{lemma}\label{lem:first-component-final}
For matrix multiplication over rings, it is possible to partition $\TTall$ into $\TTA \cup \TTB$ such that
\begin{enumerate}
    \item $\TTA$ can be processed in $O(d^{\mainresult})$ rounds in the supported low-bandwidth model, and
    \item $\TTB$ contains at most $d^{\interfacedensity}n$ triangles.
\end{enumerate}
\end{lemma}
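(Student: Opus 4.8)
The plan is to apply \cref{lem:clustering} not once (as in \cref{lem:first-component-simplified}) but in a constant number $m$ of stages, each stage pushing the density exponent up a little and dumping the clustered layers it produces into $\TTA$. Fix a small constant $\delta>0$ and a strictly increasing schedule $0=\eps^{(0)}<\eps^{(1)}<\dots<\eps^{(m)}=\epschoice$ (recall $\epschoice=2-\interfacedensity$). Set $\TT^{(1)}=\TTall$, which has $|\TT^{(1)}|\le d^2 n$ by \cref{obs:GT-size}, so the hypothesis of \cref{lem:clustering} holds with $\eps_1=\eps^{(0)}=0$. For $t=1,\dots,m$, apply \cref{lem:clustering} to $\TT^{(t)}$ with parameters $\eps_1=\eps^{(t-1)}$, $\eps_2=\eps^{(t)}$, $\delta=\delta$; this partitions $\TT^{(t)}$ into clustered sets $\PP^{(t)}_1,\dots,\PP^{(t)}_{L_t}$ and a residual $\TT^{(t+1)}$ with $|\TT^{(t+1)}|\le d^{2-\eps^{(t)}}n$ and $L_t\le 144\,d^{5\eps^{(t)}-\eps^{(t-1)}+4\delta}$; note the output guarantee of stage $t$ is exactly the input hypothesis of stage $t+1$, and $\eps^{(t-1)}<\eps^{(t)}$ throughout. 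Finally set $\TTA=\bigcup_{t,i}\PP^{(t)}_i$ and $\TTB=\TT^{(m+1)}$; this is a partition of $\TTall$, and $|\TTB|\le d^{2-\epschoice}n=d^{\interfacedensity}n$, which is part~(2). (For $d$ below the finitely many thresholds required by the invoked lemmas, we instead solve the whole instance by brute force in $O(d^2)=O(1)$ rounds. As usual, the entire construction of $\TTA$, $\TTB$ and the clusters depends only on $\TTall$, hence is precomputed in the supported model; only the processing below costs communication rounds.)

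For part~(1): each $\PP^{(t)}_i$ is clustered, so by \cref{lem:intro-clustered} it can be processed in $O(d^{\denseexp})$ rounds, and processing all of them sequentially costs $O\!\bigl((\sum_t L_t)\,d^{\denseexp}\bigr)$ rounds. The point of the schedule is to keep $\max_t\bigl(5\eps^{(t)}-\eps^{(t-1)}\bigr)$ close to its best possible value. Take the geometric schedule $\eps^{(t)}=\epschoice\,(1-5^{-t})/(1-5^{-m})$; a direct computation gives $5\eps^{(t)}-\eps^{(t-1)}=4\epschoice/(1-5^{-m})$ for every $t$, which is at most $4\epschoice+\eta$ for any desired constant $\eta>0$ once $m$ is a large enough constant (already $m=4$ gives $4\epschoice/(1-5^{-4})<0.7452$). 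Hence $\sum_t L_t\le 144m\,d^{4\epschoice+\eta+4\delta}$, and the total time to process $\TTA$ is $O\!\bigl(d^{4\epschoice+\denseexp+\eta+4\delta}\bigr)$. Since $4\epschoice+\denseexp=0.744+1.158=1.902<\mainresult$, choosing $m$ large and $\delta$ small enough (e.g.\ $m=4$, $\delta=0.0009$, so the exponent is below $0.7452+1.158+0.0036=1.9068<\mainresult$; the constant $144m$ is absorbed into the $O(\cdot)$) makes this $O(d^{\mainresult})$, which is part~(1).

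The one genuinely nontrivial point is recognizing that a single application of \cref{lem:clustering} is hopeless: its layer count grows like $d^{5\eps_2}$, so reducing $\TTB$ all the way to $d^{2-\epschoice}n$ in one shot would cost $d^{5\epschoice+\denseexp}=d^{2.088\ldots}$ rounds, worse than the trivial bound. Splitting the reduction into many small increments is exactly what replaces the multiplicative ``$5\eps_2$'' by the additive gap ``$5\eps^{(t)}-\eps^{(t-1)}$'', which the geometric schedule pins at $\approx 4\epschoice$; this is why the final exponent is $\approx 4\epschoice+\denseexp$ rather than $5\epschoice+\denseexp$, and (with the analogous accounting for $\TTB$ in \cref{sec:sparse-case}) why $\epschoice=0.186$ is the balancing choice. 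Everything else is routine bookkeeping: checking the $\eps_1<\eps_2$ and ``$d$ sufficiently large'' hypotheses at each of the $m$ stages, and bounding the geometric sum $\sum_t L_t$ by its largest term times the constant $m$.
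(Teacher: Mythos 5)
Your proposal is correct and follows essentially the same route as the paper: iterated application of \cref{lem:clustering} with a strictly increasing schedule of exponents chosen so that every stage costs roughly $d^{4\epschoice+\denseexp}$ rounds, followed by \cref{lem:intro-clustered} on each clustered layer. The paper presents the schedule as a hand-tuned numerical table of five iterations whose successive increments shrink by a factor of about $5$, which is exactly your closed-form geometric schedule $\eps^{(t)}=\epschoice(1-5^{-t})/(1-5^{-m})$; your derivation of the fixed increment $5\eps^{(t)}-\eps^{(t-1)}=4\epschoice/(1-5^{-m})$ is a cleaner way of arriving at the same parameter choice.
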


\begin{table}
    \caption{Proof of \cref{lem:first-component-final}: parameters for the five iterations of \cref{lem:clustering}.}
	\vspace{0.5em}
    \centering
    \begin{tabular}{ccccc}
    \toprule
    Iteration & $\eps_1$ & $\eps_2$ & $\delta$ & $T_c$ \\
    \midrule
    1 & 0        & 0.149775 & 0.00001 & $O(d^{1.906016})$ \\
    2 & 0.149775 & 0.179736 & 0.00001 & $O(d^{1.906044})$ \\
    3 & 0.179736 & 0.185724 & 0.00001 & $O(d^{1.906024})$ \\
    4 & 0.185724 & 0.186926 & 0.00001 & $O(d^{1.906044})$ \\
    5 & 0.186926 & 0.187166 & 0.00001 & $O(d^{1.906044})$ \\
    \bottomrule
    \end{tabular}
    \label{tab:parameters}
\end{table}

\begin{table}
    \caption{Proof of \cref{lem:first-component-final-sr}: parameters for the five iterations of \cref{lem:clustering}.}
	\vspace{0.5em}
    \centering
    \begin{tabular}{ccccc}
    \toprule
    Iteration & $\eps_1$ & $\eps_2$ & $\delta$ & $T_c$ \\
    \midrule
    1 & 0        & 0.118537 & 0.00001 & $O(d^{1.926026})$ \\
    2 & 0.118537 & 0.142249 & 0.00001 & $O(d^{1.926050})$ \\
    3 & 0.142249 & 0.146986 & 0.00001 & $O(d^{1.926020})$ \\
    4 & 0.146986 & 0.147937 & 0.00001 & $O(d^{1.926040})$ \\
    5 & 0.147937 & 0.148127 & 0.00001 & $O(d^{1.926040})$ \\
    \bottomrule
    \end{tabular}
    \label{tab:parameters-sr}
\end{table}

\begin{proof}
    We apply \cref{lem:clustering} iteratively with the parameters given in \cref{tab:parameters}. Let $\eps_1^i$,$\eps_2^i,$ and $\delta_i$ be the respective parameters in iteration $i$. After iteration $i$, we process all triangles in $\PP^1 \cup \dotsb \cup \PP^L$, in the same way as in the proof of \cref{lem:first-component-simplified}. Then we are left with at most $d^{2-\eps_2^i}n$ triangles. We can then set $\eps_1^{i+1} = \eps_2^i$, and repeat.

    In \cref{tab:parameters}, $T_c$ is the number of rounds required to process the triangles; the total running time is bounded by $O(d^{\mainresult})$, and the number of triangles that are still unprocessed after the final iteration is bounded by $d^{2-0.187166} n < d^{\interfacedensity} n$.
\end{proof}

\subsection{Final algorithm for semirings}

In the proof of \cref{lem:first-component-final} we made use of \cref{lem:intro-clustered}, which is applicable for matrix multiplication over rings. If we plug in \cref{lem:intro-clustered-sr} and use the parameters from \cref{tab:parameters-sr}, we get a similar result for semirings:

\begin{lemma}\label{lem:first-component-final-sr}
For matrix multiplication over semirings, it is possible to partition $\TTall$ into $\TTA \cup \TTB$ such that
\begin{enumerate}
    \item $\TTA$ can be processed in $O(d^{\mainresultsr})$ rounds in the supported low-bandwidth model, and
    \item $\TTB$ contains at most $d^{\interfacedensitysr}n$ triangles.
\end{enumerate}
\end{lemma}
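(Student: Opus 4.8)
The plan is to mirror the proof of \cref{lem:first-component-final} essentially verbatim, making only two substitutions: the clustered pieces are processed with the semiring dense matrix multiplication primitive (\cref{lem:intro-clustered-sr}, costing $O(d^{\denseexpsr})$ rounds per piece) in place of the ring primitive (\cref{lem:intro-clustered}, $O(d^{\denseexp})$ rounds), and the parameter schedule is the one in \cref{tab:parameters-sr} rather than \cref{tab:parameters}. As in \cref{lem:first-component-simplified}, the case of a small $d$ is handled by the brute-force algorithm, so we may assume $d$ is as large as the finitely many invocations of \cref{lem:clustering} below require. Note that, as before, the partition $\TTall = \TTA \cup \TTB$ depends only on $\TTall$, so it can be precomputed in the supported model.

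Concretely, I would invoke \cref{lem:clustering} five times. Start with $\TT^1 = \TTall$; by \cref{obs:GT-size} we have $\bigl|\TT^1\bigr| \le d^2 n$, so the hypothesis of \cref{lem:clustering} holds with $\eps_1 = 0$. In iteration $i$ (for $i = 1,\dots,5$) apply \cref{lem:clustering} to $\TT^i$ with the parameters $(\eps_1^i, \eps_2^i, \delta_i)$ from row $i$ of \cref{tab:parameters-sr}; this is legal because each row satisfies $0 \le \eps_1^i < \eps_2^i$, and because the residual bound $\bigl|\TT^{i+1}\bigr| \le d^{2-\eps_2^i} n$ guaranteed by iteration $i$ matches the input hypothesis $\bigl|\TT^{i+1}\bigr| \le d^{2-\eps_1^{i+1}} n$ of iteration $i+1$ once we set $\eps_1^{i+1} = \eps_2^i$, which is exactly how the table is laid out. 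Iteration $i$ produces a partition $\TT^i = \PP^{i,1} \cup \dots \cup \PP^{i,L_i} \cup \TT^{i+1}$ with each $\PP^{i,\ell}$ clustered and $L_i \le 144\, d^{5\eps_2^i - \eps_1^i + 4\delta_i}$ layers. Processing each clustered $\PP^{i,\ell}$ via \cref{lem:intro-clustered-sr} takes $O(d^{\denseexpsr})$ rounds, so iteration $i$ as a whole takes $O\bigl(L_i \cdot d^{\denseexpsr}\bigr) = O\bigl(d^{\denseexpsr + 5\eps_2^i - \eps_1^i + 4\delta_i}\bigr)$ rounds; this exponent is recorded as $T_c$ in \cref{tab:parameters-sr}. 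Finally set $\TTA$ to be the union of all the $\PP^{i,\ell}$ (which have been processed phase by phase, correctly since processing is additive over disjoint triangle sets) and $\TTB = \TT^6$; telescoping the five partitions shows $\TTall = \TTA \cup \TTB$ disjointly.

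It then remains only to read the two conclusions off \cref{tab:parameters-sr}. The total time to process $\TTA$ is the sum of the five $T_c$ values; since each row's $T_c$ exponent is at most $\mainresultsr$ and there are only five rows, this sum is $O(d^{\mainresultsr})$. The residual satisfies $\bigl|\TTB\bigr| \le d^{2-\eps_2^5} n < d^{\interfacedensitysr} n$, since $2 - 0.148127 < 1.854$; this is the second claim, and \cref{thm:main-intro-sr} follows once \cref{sec:sparse-case} shows $\TTB$ can also be processed in $O(d^{\mainresultsr})$ rounds. The only thing that genuinely needs verification — and the step I expect to be the "main obstacle", although it is really just a finite arithmetic check — is that the five-row schedule in \cref{tab:parameters-sr} simultaneously meets both constraints, namely $\denseexpsr + 5\eps_2^i - \eps_1^i + 4\delta_i \le \mainresultsr$ for every $i$ and $2 - \eps_2^5 < \interfacedensitysr$. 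Morally this succeeds because the gap $\eps_2^i - \eps_1^i$ contracts by a roughly constant factor at each iteration (the increments in the $\eps_2$ column shrink like a geometric series toward a fixed point near $0.1482$), so a constant number of iterations pushes the residual exponent $2 - \eps_2$ below $\interfacedensitysr$ while keeping each $T_c$ under $\mainresultsr$; five is simply a value that comfortably achieves both.
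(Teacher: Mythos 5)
Your proposal is correct and takes essentially the same approach as the paper: the paper's own proof of this lemma is just a one-line instruction to rerun the proof of \cref{lem:first-component-final} with \cref{lem:intro-clustered-sr} in place of \cref{lem:intro-clustered} and the parameters of \cref{tab:parameters-sr}, which is exactly the iteration you spell out. The final arithmetic checks you identify ($\denseexpsr + 5\eps_2^i - \eps_1^i + 4\delta_i \le \mainresultsr$ for each row, and $2 - 0.148127 < \interfacedensitysr$) indeed all go through.
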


\section{Handling the small component}
\label{sec:sparse-case}

It remains to show how to process the triangles in the remaining small component $\TTB$. 

\begin{lemma}\label{lem:sparse-case-main}
Let $d \ge 2$ and $\eps < 1$. Let $\TT \subseteq \TTall$ be a collection of triangles with
\[
    \bigl|\TT\bigr| = O(d^{2-\eps} n).
\]
Then all triangles in $\TT$ can be processed in $O(d^{2-\eps/2})$ rounds in the supported low-bandwidth model, both for matrix multiplication over rings and semirings.
\end{lemma}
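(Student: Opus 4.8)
The plan is to split $\TT$ into ``good'' and ``bad'' triangles according to how congested their incident nodes are: good triangles are handled by a direct brute-force exchange, and bad triangles are handled by letting the few overloaded nodes recruit a large pool of otherwise idle helpers. Since we work in the supported model and $\TT\subseteq\TTall$ is fixed in advance, every combinatorial choice below --- which nodes are bad, how the triangles owned by a node are split among its helpers, which helper stores which value and serves which request --- is precomputed and hard-coded; only the exchange of ring or semiring elements happens at run time.

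Set $\tau=d^{2-\eps/2}$. Call a node $v\in I\cup J$ \emph{bad} if it lies in more than $\tau$ triangles of $\TT$ and \emph{good} otherwise (the $K$-corners neither originate nor accumulate any value, so they need no classification), and call a triangle \emph{good} if both its $I$-corner and its $J$-corner are good, and \emph{bad} otherwise. Since every triangle of $\TT$ has exactly one $I$-corner and exactly one $J$-corner, the incidences over $I$ and over $J$ each sum to $|\TT|$, so there are at most $2|\TT|/\tau=O(d^{2-\eps}n/\tau)=O(n/d^{\eps/2})$ bad nodes; by \cref{obs:GT-size} each bad node lies in at most $d^2$ triangles, so there are $O(d^{4-\eps}n/\tau)$ bad triangles in total.

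For a good triangle $\{i,j,k\}$ I would process it at its $J$-corner: node $j$ already holds $B_{jk}$, collects $A_{ij}$ from node $i$ (each node of $I$ forwards its $\le d$ stored values to its $\le d$ neighbours in $J$), forms $A_{ij}B_{jk}$, and ships it tagged with $k$ to node $i$, which accumulates $X_{ik}$. A good $j$ ships $\le\tau$ products, a good $i$ receives $\le\tau$ of them, and the collection step moves $O(d)$ messages per node, so standard load-balanced routing in the low-bandwidth model completes the good phase in $O(\tau)=O(d^{2-\eps/2})$ rounds. For the bad triangles, give each bad triangle an \emph{owner} among its bad corners, give each bad node a private set of $\Theta(d^{\eps/2})$ \emph{helpers} (feasible, since the total helper demand is $O(n)$), and split the $\le d^2$ triangles it owns evenly among them, so each helper is responsible for $\le d^2/d^{\eps/2}=\tau$ triangles. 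What remains is to feed each helper the two inputs of each of its triangles and to drain the partial products with $O(\tau)$ load per node; this I would do by precomputed routing along the following lines. A bad node sends each of its own $\le d$ stored values to each of its $\Theta(d^{\eps/2})$ helpers, which is $O(d^{1+\eps/2})=O(\tau)$ messages since $\eps<1$. A value held by a \emph{good} node is pushed by that node, which sits in $\le\tau$ triangles and thus pushes $\le\tau$ values. A value held by a \emph{bad} node is instead fetched from one of that node's helpers --- all of which hold it --- with the $\le d^2$ possible requests spread evenly so that each such helper fields $\le d^2/d^{\eps/2}=\tau$ of them. And a partial product destined for a \emph{bad} $I$-node is not sent to that node (which could not absorb $\Theta(d^2)$ messages) but to one of its helpers acting as a collector for the relevant $K$-corner, each collector summing $\le\tau$ terms and then forwarding the $\le d$ finished values $X_{ik}$ to node $i$. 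Scheduling these steps as a constant number of load-balanced routing sub-phases, no node sends or receives more than $O(\tau+\log n)$ messages, so the bad phase also runs in $O(d^{2-\eps/2})$ rounds; for $d$ below an absolute constant the $O(d^2)=O(1)$-round brute force of \cref{lem:brute-force} is enough. Combining the two phases proves the lemma.

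I expect the main obstacle to be exactly the load accounting in the bad phase: a node may be the $J$-corner of $\Theta(d^2)$ bad triangles, so it cannot answer all requests for its $B$-values itself, and a node may be the $I$-corner of $\Theta(d^2)$ bad triangles, so it cannot absorb all the partial products aimed at it --- one has to check carefully that scattering these values and sums over the helper pools really keeps every participant (bad node, helper, collector, good pusher, and intermediate router) within $O(\tau)$ communication in each of $O(1)$ sub-phases, and that the private helper sets can be allocated disjointly. This is also the computation that pins down $\tau=d^{2-\eps/2}$: with $O(d^{2-\eps}n/\tau)$ bad nodes and $\le d^2$ owned triangles each, it is the one exponent for which the good phase costs $O(\tau)$ and the per-helper load in the bad phase is also $O(\tau)$.
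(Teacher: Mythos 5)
Your proposal is correct, and it follows the same overall decomposition as the paper: classify nodes as \emph{bad} when they touch more than $\tau = d^{2-\eps/2}$ triangles (so there are only $O(n/d^{\eps/2})$ of them), handle the remaining uniformly sparse part by brute-force routing as in \cref{lem:brute-force}, and give each bad node a private pool of $\Theta(d^{\eps/2})$ helpers that share its $\le d^2$ triangles so that every participant carries $O(\tau)$ load. Where you genuinely diverge is in how the bad triangles are balanced. The paper needs a single partition of the bad triangles that is simultaneously balanced at \emph{every} bad node, because it builds one virtual instance in which a bad triangle with two bad corners is handled by the color-$c$ copies of \emph{both} corners; it obtains this globally consistent balanced coloring via the Lov\'asz Local Lemma (\cref{lemma:bad-triangle-colouring}) and then invokes \cref{lem:brute-force} as a black box on the virtual instance. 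You instead assign each bad triangle a single \emph{owner}, split each owner's triangles among its helpers by a trivial deterministic partition, and compensate for a second bad corner by replicating that corner's $\le d$ input values across its helpers and spreading the $\le d^2$ lookups (and, on the output side, the partial sums for a bad $I$-corner) evenly over them. This decouples the balancing at different bad nodes and removes the need for the probabilistic argument entirely, at the price of no longer being a clean reduction to \cref{lem:brute-force}: you must argue the $O(1)$ routing sub-phases and their $O(\tau)$ per-node loads directly. Your load accounting does check out (in particular $d^{1+\eps/2} \le \tau$ uses $\eps \le 1$, the $\le d^2$ fetches per bad node spread to $\le \tau$ per helper, and each bad $I$-node has only $\le d$ output entries so the collectors can forward finished values without overloading it), and in the supported model all of these assignments are precomputable, so the argument is complete; arguably it is a simplification of the paper's proof.
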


We first show how to handle the \emph{uniformly sparse} case, where we have a non-trivial bound on the number of triangles touching each node. We then show how reduce the small component case of $O(d^{2-\eps}n)$ triangles in total to the uniformly sparse case with each node touching at most $O(d^{2-\eps/2})$ triangles.

\subsection{Handling the uniformly sparse case}
\label{sec:usparse}

To handle the uniformly sparse cases, we use a simple brute-force algorithm. Note that setting $t = d^2$ here gives the trivial $O(d^2)$-round upper bound.

\begin{lemma}
\label{lem:brute-force}
Assume each node $i \in V$ is included in at most $t$ triangles in $\TT$. Then all triangles in $\TT$ can be processed in $O(t)$ rounds, both for matrix multiplication over rings and semirings.
\end{lemma}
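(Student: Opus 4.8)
The plan is to route values along the brute-force communication pattern and then bound the number of rounds by reducing the routing to bipartite edge coloring, using the supported model to precompute the whole schedule. First I would fix the following pattern: for every triangle $\{i,j,k\} \in \TT$ with $i \in I$, $j \in J$, $k \in K$, node $j$ sends the single ring/semiring element $B_{jk}$ to node $i$; on receipt, node $i$ multiplies it by the locally known $A_{ij}$ and adds the product $A_{ij}B_{jk}$ to an accumulator for $X_{ik}$ that was initialized to $0$. Since this uses only addition and multiplication, it is valid over both rings and semirings, and once all messages have been delivered every accumulator $X_{ik}$ equals $\sum_{j:\{i,j,k\}\in\TT} A_{ij}B_{jk}$; that is, $\TT$ has been processed.

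To bound the round complexity, view these messages as the edges of a bipartite multigraph $H$ with sides $I$ and $J$, each triangle $\{i,j,k\} \in \TT$ contributing one edge $\{i,j\}$. The degree of $j \in J$ in $H$ equals the number of triangles of $\TT$ through $j$, which is at most $t$ by hypothesis, and likewise every $i \in I$ has degree at most $t$; hence $H$ has maximum degree at most $t$. By K\"onig's edge-coloring theorem, $H$ admits a proper edge coloring with at most $t$ colors, and each color class is a matching, so all messages of one color can be delivered in a single round---every node then sends at most one message and receives at most one. Thus $t$ rounds suffice in the $3n$-node virtual network, and since each of the $n$ physical computers simulates three virtual nodes this incurs only a constant-factor overhead, for $O(t)$ rounds in total.

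The remaining point is that all nodes must follow this schedule in lockstep without run-time coordination, which is exactly where the supported assumption is used: $H$ depends only on $\TT$, and $\TT$ in turn depends only on $\indA$, $\indB$, $\indX$ and the (likewise precomputed) construction producing $\TT$, so $H$ together with a fixed proper $t$-edge-coloring can be computed in advance; at run time only the values $B_{jk}$ travel along the already-agreed edges, and from the fixed schedule each node $i$ knows, for every round, which triangle $\{i,j,k\}$ the incoming value belongs to and hence which accumulator to update. The hard part, such as it is, is simply to phrase the routing step cleanly---invoking K\"onig's theorem to turn the per-node degree bound into a round bound rather than arguing ad hoc, and making explicit that precomputability of the schedule (together with the three-virtual-nodes-per-computer bookkeeping) is what keeps everything within $O(t)$ rounds.
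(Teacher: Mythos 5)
Your proposal is correct and takes essentially the same route as the paper: the same routing pattern (node $j$ sends $B_{jk}$ to node $i$ for each triangle, and $i$ accumulates $A_{ij}B_{jk}$), followed by a precomputed edge coloring of the message graph to schedule delivery in $O(t)$ rounds. The only cosmetic difference is that you invoke K\"onig's theorem on the bipartite multigraph to get exactly $t$ colors, whereas the paper settles for a generic $O(t)$-edge-coloring of a maximum-degree-$t$ graph; both suffice.
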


\begin{proof}
To process all triangles in $\TT$, we want, for each triangle $\{ i, j, k \} \in \TT$, that node $j$ sends the entry $B_{jk}$ to node $i$. Since node $i$ knows $A_{ij}$ and is responsible for output $X_{ik}$, this allows node $i$ to accumulate products $A_{ij}B_{jk}$ to $X_{ik}$.

As each node is included in at most $t$ triangles, this means that each node has at most $t$ messages to send, and at most $t$ messages to receive, and all nodes can compute the sources and destinations of all messages from $\TT$. To see that these messages can be delivered in $O(t)$ rounds, consider a graph with node set $V \times V$ and edges representing source and destination pairs of the messages. Since this graph has maximum degree $t$, it has an $O(t)$-edge coloring (which we can precompute in the supported model). All nodes then send their messages with color $k$ on round $k$ directly to the receiver, which takes $O(t)$ rounds.
\end{proof}

\subsection{From small to uniformly sparse}

We now proceed to prove \cref{lem:sparse-case-main}. For purely technical convenience, we assume that $\TT$ contains at most $d^{2-\eps} n$ triangles---in the more general case of \cref{lem:sparse-case-main}, one can for example split $\TT$ into constantly many sets of size at most $d^{2-\eps} n$ and run the following algorithm separately for each one.

\paragraph{Setup} We say that a node $i \in V$ is a \emph{bad node} if $i$ is included in at least $d^{2-\eps/2}$ triangles in $\TT$, and we say that a triangle is a \emph{bad triangle} if it includes a bad node.
Since every triangle touches at most 3 nodes, the number of bad nodes is at most $3d^{2-\eps}n/d^{2-\eps/2} =  3n/d^{\eps/2}$.

We now want to distribute the responsibility for handling the bad triangles among multiple nodes. The following lemma gives us the formal tool for splitting the set of bad triangles evenly.

\begin{lemma}\label{lemma:bad-triangle-colouring}	
Assume $d \ge 2$ and $\eps < 1/2$. There exists a coloring of the bad triangles with $d^{\eps/2}/3$ colors such that for any color $c$, each bad node $i \in V$ touches at most $6d^{2-\eps/2}$ bad triangles of color $c$.
\end{lemma}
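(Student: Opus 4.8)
The plan is to obtain the coloring by a random assignment followed by a union bound, in the style of a standard Chernoff argument. First I would assign to each bad triangle $T$ a color $c(T)$ chosen independently and uniformly at random from the palette of $C := \lceil d^{\eps/2}/3 \rceil$ colors. Fix a bad node $i$ and a color $c$; let $m_i$ be the number of bad triangles touching $i$, and let $Z_{i,c}$ be the number of those that receive color $c$. Then $\E[Z_{i,c}] = m_i / C$. We want to conclude $Z_{i,c} \le 6 d^{2-\eps/2}$ for all $i,c$ simultaneously, so it suffices to control the upper tail of each $Z_{i,c}$ around its mean and take a union bound over the (at most) $3n/d^{\eps/2}$ bad nodes and $C \le d^{\eps/2}$ colors, i.e.\ over at most $3n$ pairs.

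The key quantitative step is the following case split on how large $m_i$ is. If $m_i \ge 3 d^{2-\eps/2} C$ (say), then $\E[Z_{i,c}] = m_i/C \ge 3 d^{2-\eps/2}$ is reasonably large, and a multiplicative Chernoff bound gives $\Pr[Z_{i,c} \ge 2\E[Z_{i,c}]] \le \exp(-\E[Z_{i,c}]/3) \le \exp(-d^{2-\eps/2})$, which is far smaller than $1/(3n)$ once $d$ is a sufficiently large constant (and for small $d$ the whole lemma can be checked or absorbed by brute force); since each node is in at most $d^2$ triangles, $m_i \le d^2$, so $2\E[Z_{i,c}] \le 2 d^2 / C \le 6 d^{2-\eps/2}$, which is the desired bound. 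If instead $m_i < 3 d^{2-\eps/2} C \le 3 d^{2-\eps/2} \cdot d^{\eps/2} = 3 d^2$ — wait, that bound is too weak; the honest statement is $m_i < 3 d^{2-\eps/2} C \le d^{2-\eps/2}\cdot d^{\eps/2} = d^2$ is also not tight enough, so instead one uses an \emph{additive} Chernoff/Hoeffding bound here: $\Pr[Z_{i,c} \ge \E[Z_{i,c}] + d^{2-\eps/2}] \le \exp(-2 d^{4-\eps}/m_i) \le \exp(-2 d^{2-\eps})$ using $m_i \le d^2$, again exponentially small in a growing power of $d$. Combining the two regimes, $Z_{i,c} \le \max\{2\E[Z_{i,c}],\, \E[Z_{i,c}] + d^{2-\eps/2}\} \le 6 d^{2-\eps/2}$ in both cases with probability at least $1 - \exp(-\Omega(d^{2-\eps}))$ per pair. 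A union bound over at most $3n$ pairs then fails only with probability $3n \exp(-\Omega(d^{2-\eps}))$, which is less than $1$ whenever $d$ is at least some constant depending on nothing but $\eps$; hence a good coloring exists. Since $\TTall$, the set of bad nodes, and the set of bad triangles are all determined by $\indA,\indB,\indX$, the coloring can be fixed in advance in the supported model.

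The main obstacle I anticipate is not the probabilistic argument itself but making the constants line up: the target $6 d^{2-\eps/2}$ has to simultaneously accommodate the mean term $m_i/C$ (which can be as large as $\approx 3 d^{2-\eps/2}$ when $m_i$ is near $d^2$ and $C \approx d^{\eps/2}/3$) and the Chernoff deviation term, and one has to be careful that $C = d^{\eps/2}/3$ may not be an integer — rounding to $\lfloor d^{\eps/2}/3\rfloor$ colors only increases $m_i/C$ by a constant factor, which the slack factor $6$ absorbs. A secondary subtlety is the hypothesis $\eps < 1/2$ (versus $\eps<1$ in \cref{lem:sparse-case-main}): it guarantees $d^{2-\eps/2} \gg d^{\eps/2}$ so that the per-color bound is meaningful and the exponents $2-\eps$, $2-\eps/2$ are bounded away from each other; the reduction in the next part of the paper presumably arranges that only this restricted range needs to be handled. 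I would also remark that one could instead give a deterministic construction by greedily/round-robin assigning colors, but the random argument is cleaner and the supported model makes derandomization cost-free anyway.
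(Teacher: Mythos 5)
Your random-coloring setup and the concentration estimate match the paper's proof, but the final step --- the union bound --- is where the argument genuinely breaks. You have roughly $(3n/d^{\eps/2}) \cdot (d^{\eps/2}/3) = n$ node--color pairs, and your per-pair failure probability is only $\exp(-\Omega(d^{2-\eps}))$. For the union bound to succeed you therefore need $d^{2-\eps} = \Omega(\log n)$, i.e.\ $d$ must grow with $n$. Your claim that the failure probability is ``far smaller than $1/(3n)$ once $d$ is a sufficiently large constant'' is false: for any fixed $d$ the bound $n\exp(-\Omega(d^{2-\eps}))$ exceeds $1$ once $n$ is large enough, and the regime of a small or constant $d$ with huge $n$ is exactly the regime this paper cares about (it is where the $O(d^2)$ barrier lives, and the advertised running times are independent of $n$). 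Your parenthetical escape hatch (``for small $d$ the whole lemma can be checked or absorbed by brute force'') only covers $d = O(1)$ with the constant not depending on $n$; it does not cover, say, $d = \Theta(\log^{1/2} n)$, where your union bound still fails but brute force is not an option.

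The paper closes this gap with the Lov\'asz Local Lemma rather than a union bound. The event $A_{v,c}$ (node $v$ sees too many triangles of color $c$) depends only on the colors of triangles touching $v$, so it is independent of all events $A_{u,c'}$ for nodes $u$ not adjacent to $v$ in $G(\TT)$; the dependency degree is therefore at most about $d^{\eps/2}(d+1)/3$, a quantity depending only on $d$ and not on $n$. The LLL condition $ep(D+1)\le 1$ then compares $\exp(-4d^{2-\eps})$ against $d^{1+\eps/2}$, which holds for all $d \ge 2$ and $\eps < 1/2$ --- uniformly in $n$. This is the one idea your proposal is missing; everything else (the uniform random coloring, the binomial distribution of $X_{v,c}$, the bound $\E[X_{v,c}] \le 3d^{2-\eps/2}$ so that $6d^{2-\eps/2} \ge 2\E[X_{v,c}]$, and the multiplicative Chernoff bound using $\E[X_{v,c}] \ge 3d^{2-\eps}$) is essentially what the paper does. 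Your closing remark about a greedy/round-robin deterministic construction is plausible but undeveloped, so it cannot substitute for the missing step as written.
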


\begin{proof}
We prove the existence of the desired coloring by probabilistic method. We color each triangle uniformly at random with $d^{\eps/2}/3$ colors. Let $A_{v,c}$ denote the event that there are at least $6d^{2-\eps/2}$ triangles of color $c$ touching a bad node $v$ in this coloring; these are the bad events we want to avoid.

Fixing $v$ and $c$, let $X_{v,c}$ be the random variable counting the number of triangles of color $c$ touching $v$. This is clearly binomially distributed with the number of samples corresponding to the number of triangles touching $v$ and $p = 3/d^{\eps/2}$. 

Let $t_v$ be the number of triangles touching $v$. We have that
\[ d^{2-\eps/2} \leq t_v \leq d^2\,, \]
and thus
\[ 3d^{2-\eps} \leq \E[X_{v,c}] \leq  3d^{2 - \eps/2}\,.\]
By Chernoff bound, it follows that
\[ \Pr[ A_{v,c}] \le \Pr[ X_{v,c} \ge 2 \E[X_{v,c}]] \le e^{-4\E[X_{v,c}] / 3} \le e^{-4d^{2-\eps}}\,.\]

We now want to apply Lov\'as Local Lemma to show that the probability that none of the events $A_{v,c}$ happen is positive. 
We first observe that event $A_{v,c}$ is trivially independent of any set of other events $A_{u,d}$ that does not involve any neighbor of $v$. Since the degree of graph $G$ is at most $d$, the degree of dependency (see e.g.\ \cite{jukna2011extremal}) for events $A_{v,c}$ is at most $d^{\eps/2}(d+1)/3$.

It remains to show that Lov\'as Local Lemma condition $ep(D+1) \le 1$ is satisfied, where $p$ is the upper bound for the probability of a single bad event happening, and $D$ is the degree of dependency. Since we assume that $d \ge 2$ and $\eps < 1/2$, we have
\begin{align*}
1/pe & = e^{4d^{2-\eps}-1} > e^{2d^{2-\eps}} \ge 1 + (e-1)2d^{2-\eps} \\
     & \ge 1 + 2d^{2-\eps} > 1 + 2d^{1+\eps/2} \ge 1 + (d+1)d^{\eps/2} \ge D + 1\,.
\end{align*}
The claim now follows from Lov\'as Local Lemma.
\end{proof}

\paragraph{Virtual instance.} We now construct a new \emph{virtual} node set
\[\virt{V} = \virt{I} \cup \virt{J} \cup \virt{K}\,,\]
along with matrices $\virt{A}$ and $\virt{B}$ indexed by $\virt{I}$, $\virt{J}$, and $\virt{K}$, and a set of triangles $\virt{\TT}$ we need to process from this virtual instance. The goal is that processing all triangles in $\virt{\TT}$ allows us to recover the solution to the original instance.

Let $\chi$ be a coloring of bad triangles as per \cref{lemma:bad-triangle-colouring}, and let $C$ be the set of colors used by $\chi$. We construct the virtual node set $\virt{I}$ (resp., $\virt{J}$ and $\virt{K}$) by adding a node $i_0$ for each non-bad node $i \in I$ (resp., $j \in J$ and $k \in K$), and nodes $v_c$ for bad node $v \in V$ and color $c \in C$. Note that the virtual node set $\virt{V}$ has size at most $2 |V|$. Finally, for technical convenience, we define a \emph{color set} $c(v)$ for node $v \in V$ as
\[
c(v) = 
\begin{cases}
\{ 0 \}\,,&  \text{if $v$ is non-bad, and}\\
C\,,      & \text{if $v$ is bad.}
\end{cases}
\]
The matrix $\virt{A}$ is now defined by setting $A_{i_c j_d} = A_{ij}$ for $i \in I$, $j \in J$ and colors $c \in c(i)$ and $d \in c(j)$. Matrix $\virt{B}$ is defined analogously.

The set of triangles $\virt{\TT}$ is constructed as follows. For each non-bad triangle  $\{ i, j,k \} \in \TT$, we add $\{ i_0, j_0, k_0 \}$ to $\virt{\TT}$. For each bad triangle $T \in \TT$ with color $\chi(T)$, we add to $\virt{\TT}$ a new triangle obtained from $T$ by replacing each bad node $v \in T$ with $i_{\chi(\TT)}$ and each non-bad node by $v_0$. By construction, each node in $\virt{V}$ is included in at most $6d^{2-\eps/2}$ triangles in $\virt{\TT}$.  Moreover, if matrix $\virt{X}$ represents the result of processing all triangles in $\virt{\TT}$, we can recover the results $X$ of processing triangles in $\TT$ as
\begin{equation}\label{eq:simulation-recovery}
X_{ij} = \sum_{c \in c(i)} \sum_{d \in c(j)} \virt{X}_{i_c j_d}\,.
\end{equation}

\paragraph{Simulation.} We now show that we can construct the virtual instance as defined above from $A$, $B$ and $\TT$, simulate the execution of the algorithm of \cref{lem:brute-force} in the virtual instance with constant-factor overhead in the round complexity, and recover the output of the original instance from the result.

As preprocessing based on the knowledge of $\TT$, all nodes perform the following steps locally in an arbitrary and consistent way:
\begin{enumerate}
	\item Compute the set of bad nodes and bad triangles.
	\item Compute a coloring $\chi$ of bad triangles as per \cref{lemma:bad-triangle-colouring}.
	\item For each bad node $v \in V$, assign a set of \emph{helper nodes}
	\[ U_v = \{ v_c \colon c \in C \} \subseteq V \]
	of size $d^{\eps/2}/2$ so that helper node sets are disjoint for distinct bad nodes. Note that this is always possible, as the number of required nodes is at most $|V|$.
\end{enumerate}

We now simulate the execution of triangle processing on the virtual instance as follows. Each non-bad node $v$ simulates the corresponding node $v_0$ in the virtual instance, as well as a duplicate of bad node $u_c$ if they are assigned as helper node $v_c$. Each bad node simulates their assigned bad node duplicate. To handle the duplication of the inputs and collection of outputs, we use the following simple routing lemma.

\begin{lemma}\label{lemma:broadcast-and-convercast}
Let $v \in V$ be a node and let $U \subseteq V$ be a set of $k$ nodes. The following communication tasks can be performed in $O(d + \log k)$ rounds in low-bandwidth model using only communication between nodes in $\{ v \} \cup U$:
\begin{enumerate}
	\item[(a)] Node $v$ holds $d$ messages of $O(\log n)$ bits, and each node in $U$ needs to receive each message held by $v$.
	\item[(b)] Each node $u \in U$ holds $d$ values $s_{u,1}, \dotsc s_{u,d}$, and node $v$ needs to learn $\sum_{u \in U} v_{u,i}$ for $i = 1, 2, \dotsc, d$.
\end{enumerate}
\end{lemma}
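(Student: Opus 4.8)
The plan is to realize both tasks via a balanced binary tree whose root is $v$ and whose remaining $k$ nodes are the elements of $U$, arranged in an arbitrary fixed order. The depth of the tree is $O(\log k)$. The subtlety, and the reason the round count is $O(d + \log k)$ rather than $O(d \log k)$, is that we pipeline the $d$ messages (or the $d$ coordinates of the sum) through the tree: once the pipeline is full, one new message moves one level per round, so after $O(\log k)$ rounds to fill the pipeline and $O(d)$ rounds to push all $d$ items through, every node has what it needs.

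For part~(a), I would have $v$ enumerate its $d$ messages $m_1, \dotsc, m_d$ and, on round $r$, send $m_r$ to each of its (at most two) children; in general each internal node, upon receiving $m_r$ from its parent on some round, forwards it to both children on the next round. Since each node sends to at most two recipients per round, and the low-bandwidth model allows one message sent and one received per round per node, a node that must forward to two children does so over two consecutive rounds; this only doubles the constant. The last message $m_d$ is injected at the root on round $d$ (or $2d$ with the doubling) and reaches the deepest leaf $O(\log k)$ rounds later, giving total $O(d + \log k)$. For part~(b), I would run the same tree in reverse: each leaf holds its vector $(s_{u,1}, \dotsc, s_{u,d})$, and each internal node, once it has received the partial sums for coordinate $i$ from both children, adds in its own $s_{\cdot,i}$ and forwards the result to its parent on the next round; coordinates are pipelined in order $i = 1, \dotsc, d$. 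After $O(d + \log k)$ rounds the root $v$ has accumulated $\sum_{u \in U} s_{u,i}$ for every $i$. The convergecast needs the extra care that a node waits for \emph{both} children before forwarding a given coordinate, but since the two child-messages for coordinate $i$ arrive within one round of each other in a synchronized pipeline, this adds only a constant delay per level, absorbed into the $O(\log k)$ term.

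The main obstacle to watch is the strict one-message-per-round constraint of the low-bandwidth model: a naive broadcast where a node must simultaneously receive from its parent and send to two children would violate it. I handle this by the standard device of giving each node a small constant number of rounds per ``logical step'' (receive in one, send to left child in the next, send to right child in the one after), which keeps the schedule conflict-free while preserving the $O(d + \log k)$ bound up to constants; the routing schedule is fixed and, in the supported model, can be precomputed. No global coordination or knowledge of the matrix values is needed, since the tree structure depends only on $\{v\} \cup U$.
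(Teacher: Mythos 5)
Your proposal is correct and follows essentially the same route as the paper: an (implicitly balanced) binary tree rooted at $v$ over $\{v\}\cup U$, pipelined broadcast for part (a) and the reversed, pipelined convergecast with partial sums for part (b), with a constant number of rounds per level to respect the one-message-per-round constraint. The paper's proof is just a terser version of the same argument, so nothing further is needed.
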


\begin{proof}
For both parts, fix an arbitrary binary tree $T$ on $\{ v \} \cup U$ rooted at $v$. For part (a), a single message can be broadcast to all nodes along $T$ in $O(\log k)$ rounds, by simply having each node spend $2$ rounds sending it to both its children once the node receives the message. For $d$ messages, we observe that the root $v$ can start the sending the next message immediately after it has sent the previous one---in the standard pipelining fashion---and the communication for these messages does not overlap. Thus, the last message can be sent by $v$ in $O(d)$ rounds, and is received by all nodes in $O(d + \log k)$ rounds.

For part (b), we use the same idea in reverse. For a single index $i$, all leaf nodes $u$ send $s_{u,i}$ to their parent, alternating between left and right children on even and odd rounds. Subsequently, nodes compute the sum of values they received from their children, and send it their parents. For multiple values, the pipelining argument is identical to part~(a).
\end{proof}

Now the simulation proceeds as follows:
\begin{enumerate}
	\item Each bad node $v \in V$ sends their input to all nodes in $U_v$ in parallel. This takes $O(d^{\eps/2} + \log d)$ rounds by \cref{lemma:broadcast-and-convercast}(a).
	\item Each node locally computes the rows of $\virt{A}$ and $\virt{B}$ for the nodes of $\virt{V}$ they are responsible for simulating.
	\item Nodes collectively simulate the algorithm of \cref{lemma:bad-triangle-colouring} on the virtual instance formed by $\virt{A}$, $\virt{B}$ and $\virt{\TT}$ to process all triangles in the virtual triangle set $\virt{\TT}$. Since all nodes in $\virt{V}$ touch $O(d^{2-\eps/2})$ triangles, and overhead from simulation is $O(1)$, this takes $O(d^{2-\eps/2})$ rounds.
	\item Each bad node $i \in I$ recovers row $i$ of output $X$ according to \cref{eq:simulation-recovery} by using \cref{lemma:broadcast-and-convercast}(b). Each non-bad node $i$ computes the row $i$ of $X$ based on their local knowledge according to \cref{eq:simulation-recovery}. This takes $O(d^{\eps/2} + \log d)$.
\end{enumerate}

\section{Putting things together}

Now we are ready to prove our main theorems:

\begin{proof}[Proof of \cref{thm:main-intro}]
By \cref{lem:first-component-final}, we can partition $\TTall$ into $\TTA$ and $\TTB$ such that $\TTA$ can be processed in $O(d^{\mainresult})$ rounds, and $\TTB$ has at most $d^{\interfacedensity} n$ triangles. Then we can apply \cref{lem:sparse-case-main} to $\TTB$ with $\eps = \epschoice$ and hence process also $\TTB$ in $O(d^{\mainresult})$ rounds.
\end{proof}

\begin{proof}[Proof of \cref{thm:main-intro-sr}]
By \cref{lem:first-component-final-sr}, we can partition $\TTall$ into $\TTA$ and $\TTB$ such that $\TTA$ can be processed in $O(d^{\mainresultsr})$ rounds, and $\TTB$ has at most $d^{\interfacedensitysr} n$ triangles. Then we can apply \cref{lem:sparse-case-main} to $\TTB$ with $\eps = \epschoicesr$ and hence process also $\TTB$ in $O(d^{\mainresultsr})$ rounds.
\end{proof}

\subsection*{Acknowledgements}

We are grateful to the anonymous reviewers for their helpful feedback on the previous versions of this work.
This work was supported in part by the Academy of Finland, Grant 321901.

\bibliographystyle{plainnat}
\bibliography{triangles}

\end{document}